\begin{document}

\def\grp{\mathcal{G}}\textsl{}
\def\ep{\mathbb E}
\def\myd{\textrm{d}}
\def\coreHD{\textsc{corehd}}
\def\WN{\textsc{weak-neighbor}}
\def\CITM{\textsc{citm}}
\newtheorem{thm}{Theorem}
\newtheorem{lem}{Lemma}
\newtheorem{myconjecture}{Conjecture}
\newtheorem{mydef}{Definition}
\newtheorem{myproof}{Proof}
\def\code#1{\texttt{#1}}
\def\tpd{\the\prevdepth}

\SetKwInput{Kw}{Function}
\SetKwFunction{core}{core}


\title{On Minimal Sets to Destroy the $k$-Core in Random Networks}

\author{Christian Schmidt$^{1}$, Henry D. Pfister$^{2}$, Lenka Zdeborov\'a$^{1}$}

\affiliation{
$^1$ Institut de Physique Th\'eorique, CEA Saclay and CNRS, 91191, Gif-sur-Yvette, France. \\
$^{2}$ Department of Electrical and Computer Engineering, Duke University, Durham, North Carolina, USA.
}

\begin{abstract}
We study the problem of finding the smallest set of nodes in a network whose removal results in an empty $k$-core; where the $k$-core is the sub-network obtained after the iterative removal of all nodes of degree smaller than $k$. This problem is also known in the literature as finding the minimal contagious set. 
The main contribution of our work is an analysis of the performance of the recently introduced \coreHD\ algorithm [Scientific Reports, {\bf6}, 37954 (2016)] on random networks taken from the configuration model via a set of deterministic differential equations. Our analyses provides upper bounds on the size of the minimal contagious set that improve over previously known bounds. Our second contribution is a new heuristic called the \WN\  algorithm that outperforms all currently known local methods in the regimes considered.
\end{abstract} 

\maketitle

\section{Introduction \label{sec:introduction}}

Threshold models are a common approach to model collective dynamical processes on networks. 
On a daily basis, we face examples such as the spreading of epidemics, opinions and decisions in social networks and biological systems. Questions of practical importance are often related to optimal policies to control such dynamics. What is the optimal strategy for vaccination (or viral marketing)? How to prevent failure propagation in electrical or financial networks?  Understanding the underlying processes, and finding fast and scalable solutions to these optimal decision problems 
are interesting scientific challenges. 
Providing mathematical insight might be relevant to the resolution of practical problems.

The contribution of this paper is related to a widely studied model for dynamics on a network: the threshold model~\cite{granovetter1978threshold}, also known as bootstrapping percolation in physics~\cite{chalupa1979bootstrap}. The network is represented by a graph $\mathcal{G}=(\mathcal{V},\mathcal{E})$, the nodes of the graph can be either in an active or inactive state. In the threshold model, a node $v \in \mathcal{V}$ changes state from inactive to active if more than $t_v-1$ of its neighbors are active. Active nodes remain active forever.
The number $t_v$ is called the threshold for node $v$. Throughout this paper we will be concerned with the case the threshold $t_v=d_v-k+1$ where $d_v$ is the degree of node $v$ and $k$ is some fixed integer. 

In graph theory, the $k$-core of a graph is defined as the largest induced subgraph of $\mathcal{G}$ with minimum degree at least $k$. This is equivalent to the set of nodes that are left after \emph{repeatedly} removing all nodes of degree smaller than~$k$.
The significance of the $k$-core of a graph for the above processes can be laid out by consideration of the complementary problem. Nodes that remain inactive, after the dynamical process has converged, must have less than $t_v$ active neighbors. Or, equivalently, they have $\kappa_v = d_v-t_v+1$ or more inactive neighbors. 
Since the dynamics are irreversible, activating a node with at least $t_v$ active neighbors may be seen as removing an inactive node with fewer than $d_v-t_v$ nodes from the graph.
In that sense, destroying the $k$-core is equivalent to the activation of the whole graph under the above dynamics with $t_v=d_v-k+1$. In other words, the set of nodes that leads to the destruction of the $k$-core is the same set of nodes that lead, once activated, to the activation of the whole graph. 
Following the literature~\cite{Reichman2012,Coja-Oghlan14,GuggiolaSemerjian2015} we call the smallest such set the {\it minimal contagious sets} of $\mathcal{G}$. In a general graph this decision problem is known to be NP-hard~\cite{dreyer2009irreversible}. 

Most of the existing studies of the minimal contagious set problem are algorithmic works in which algorithms are proposed and heuristically tested against other algorithms on real and synthetic networks; see e.g.~\cite{altarelli2013optimizing,altarelli2013large,Braunstein2016,ZdeborovaZhangZhou2016,MakseMorone2017} for some recent algorithmic development.  Theoretical analysis commonly brings deeper understanding of a given problem. In the case of minimal contagious sets, such analytic work focused on random graphs, sometimes restricted to random regular graphs, in the limit of large graphs. On random graphs the minimal contagious set problem undergoes a threshold phenomenon: in the limit of large graph size the fraction of nodes belonging to the minimal contagious set is with high probability concentrated around a critical threshold value.

To review briefly the theoretical works most related to our contribution we start with the special case of $k=2$ that has been studied more thoroughly than $k\ge 3$. The choice of $k=2$ leads to the removal of the $2$-core, i.e. removal of all cycles, and is therefore referred to as the decycling problems. This case is also known as the feedback vertex set, one of the 21 problems first shown to be NP-complete~\cite{karp1972reducibility}. On random graphs the decycling problem is very closely related to network dismantling, i.e.\ removal of the giant component~\cite{Braunstein2016}. 
A series of rigorous works analyzes algorithms that are leading to a the best known bounds on the size of the decycling set in random regular graphs~\cite{BauWormaldZhou02,HoppenWormald08}. 
Another series of works deals with the non-rigorous cavity method and estimated values for the decycling number that are expected to be exact or extremely close to exact~\cite{zhou2013spin,GuggiolaSemerjian2015,Braunstein2016}. 
The cavity method predictions also lead to specific message passing algorithms~\cite{altarelli2013optimizing,altarelli2013large,Braunstein2016}. While there is some hope that near future will bring rigorous establishment on the values of the  thresholds predicted by the cavity method, along the lines of the recent impressive progress related to the $K$-SAT problem~\cite{ding2015proof}, it is much further of reach to analyze rigorously the specific type of message passing algorithms that are used in~\cite{altarelli2013optimizing,altarelli2013large,Braunstein2016}. A very well performing algorithm for decycling and dismantling has been recently introduced in~\cite{ZdeborovaZhangZhou2016}. One of the main contributions of the present work is to analyze exactly the performance of this algorithm thus leading to upper bounds that are improving those of~\cite{BauWormaldZhou02} and partly closing the gap between the best know algorithmic upper bounds and the expected exact thresholds~\cite{GuggiolaSemerjian2015,Braunstein2016}.

The case of contagious sets with $k\ge 3$ is studied less broadly, but the state-of-the-art is similar to the decycling problem. Rigorous upper bounds stem from analysis of greedy algorithms~\cite{Coja-Oghlan14}. The problem has been studied very thoroughly via the cavity method on random regular graphs in~\cite{GuggiolaSemerjian2015}, result of that paper are expected to be exact or very close to exact.

\subsection{Summary of our contribution \label{subsec:contributions}}

This work is inspired by the very simple decycling algorithm \coreHD\  proposed in~\cite{ZdeborovaZhangZhou2016}. In numerical experiments the authors of~\cite{ZdeborovaZhangZhou2016} found that \coreHD\ is close in performance to the message passing of~\cite{Braunstein2016}, which is so far the best performing algorithm for random graphs. The \coreHD\  algorithm of~\cite{ZdeborovaZhangZhou2016} proceeds by iterating the following two steps: first the $2$-core of the graph is computed and second a largest-degree-node is removed from the $2$-core. It was already anticipated in~\cite{ZdeborovaZhangZhou2016}  that this algorithm can be easily extended to the threshold dynamics where we aim to remove the $k$-core by simply replacing the $2$ with a $k$.

In this work, we observe that the dynamics of the \coreHD\  algorithm is amenable to rigorous analysis and performance characterization for random graphs drawn from the configuration model with any bounded degree distribution.  
We show that it is possible to find a deterministic approximation of the macroscopic dynamics of the \coreHD\  algorithm. Our results are based on the mathematical analysis of peeling algorithms for graphs by Wormald in~\cite{Wormald1995}.  Our particular treatment is inspired by the tutorial treatment in~\cite{Pfister14analysis} of the decoding analysis in~\cite{Luby-it01}.
Clearly the algorithm cannot be expected to perform optimally as it removes nodes one by one, rather than globally. 
However, our work narrows considerably the gap between the existing rigorous upper bounds~\cite{BauWormaldZhou02, Coja-Oghlan14} and the expected optimal results~\cite{GuggiolaSemerjian2015}. 
Note also that the fully solvable relaxed version -- where first \emph{all} nodes of the largest degree are removed from the core before the core is re-evaluated -- yields improvements.

Our analysis applies not only to random regular graphs, but also to random graphs from the configuration model defined by a degree distribution.
The basic theory requires that the degree distribution is bounded (i.e. all the degrees are smaller than some large constant independent of the size of the graph). 
But, the most commonly used Erd\H{o}s-R\'enyi random graphs have a Poisson degree distribution whose maximum degree grows slowly.
Fortunately, one can add all the nodes with degree larger than a large constant to the contagious set.
If the fraction of thus removed edges is small enough, then the asymptotic size of the contagious set is not affected and the same result holds.
Results are presented primarily for random regular graphs in order to compare with the existing results. 

The following results are presented.

\paragraph{Exact analysis of the \coreHD\  algorithm.}
We show that the \coreHD\  algorithm (generalized to $k$-core removal) translates to a random process on the degree distribution of the graph $\mathcal{G}$. We track this random process by derivation of the associated continuous limit. This is done by separating the random process into two deterministic processes and absorbing all randomness into the running time of one of them.
We derive a condition for the running time in terms of the state of the system and this reduces the dynamics to a set of coupled non-linear ordinary differential equations (ODEs) describing the behaviour of the algorithm on a random graph, eqs.~(\ref{eq:dgl_final1}--\ref{eq:dgl_final3}).

\paragraph{New upper bounds on the size of the minimal contagious sets in random graphs.}
The stopping time of the before-mentioned ODEs is related to the number of nodes that were removed from the $k$-\emph{core} during the process. Thus providing upper bounds on the expected minimal size of the contagious set of $\mathcal{G}$. A numerical evaluation shows that the bounds improve the best currently known ones~\cite{BauWormaldZhou02} and narrow the gap to the anticipated exact size of the minimal contagious set from~\cite{GuggiolaSemerjian2015}, see e.g. table \ref{tbl:bounds} for the decycling, $k=2$, problem.

\paragraph{Improved heuristic algorithm.}
Based on intuition we gained analyzing the \coreHD\  algorithm we propose it's extension that further improves the performance. In this new algorithm, instead of first removing high degree nodes, we first remove nodes according to the decision rule $\arg\max \left[ d_i -\frac{1}{d_i}\sum_{j\in \partial i} d_j \right]$. On graphs with bounded degree this algorithm has $O(N)$ running time, where $N$ is the number of nodes in the graph. In experiments we verify that this \WN\  heuristics improves over \coreHD\  and other recently introduced algorithms such as the \CITM\ of~\cite{MakseMorone2017}.

The paper is organized in two main parts. The first part in section \ref{sec:analysis} is devoted to the analysis of the generalized \coreHD\  algorithm and comparison of the resulting upper bounds with existing results. In the second part in section \ref{sec:improving} we introduce the new algorithm called the \WN\  
(that we do not study analytically) and close with some numerical experiments and comparison with other local algorithms.

\section{The analysis of the \coreHD\  algorithm \label{sec:analysis}}

In Algorithm \ref{alg:hd} we outline the \coreHD\  algorithm of~\cite{ZdeborovaZhangZhou2016}, generalized from $k=2$ to generic $k$. The algorithm provides us with a {\it contagious set} of nodes $\mathcal{D}$ such that after their removal the resulting graph has an empty $k$-core.
Consequently the size of  $\mathcal{D}$ provides an upper bound on the size of the minimal contagious set. 
In terms of Algorithm~\ref{alg:hd}, our aim is to show that the size of $\mathcal{D}$ per node has a well defined limit, and to compute this limit. 

\SetKwData{KK}{$k$}
\SetKwData{CC}{$\mathcal{C}$}
\SetKwData{GG}{$\mathcal{G}$}

\begin{algorithm}[H] 
	\SetAlgoNoLine
	\LinesNumbered
		\KwData{$\mathcal{G}(\mathcal{V},\mathcal{E})$}
		\KwResult{A set of nodes $\mathcal{D}$ whose removal makes the $k$-core vanish}
		\Kw{\core(\KK,\GG) returns the $k$-core of the graph $\mathcal{G}$}
		Initialize: $\mathcal{C} \gets \core(\KK,\GG)$ , $\mathcal{D} = \{ \ \}$ \\
		\While{$\left|\mathcal{C}\right|>0$}
		{
			$\mathcal{M} \gets \{i \in \mathcal{V}_{\mathcal{C}} \mid i = \arg\max \left[ d_i \right]\}$ 
			\tcp*{$\mathcal{V}_{\mathcal{C}}$ is the set of nodes in $\mathcal{C}$}
			$r \gets \textrm{uniform}(\mathcal{M})$ \;
			$\mathcal{C} \gets \mathcal{C}\backslash r$ \;
			$\mathcal{D} \gets \mathcal{D} \cup \{r\}$ \;
			$\mathcal{C} \gets \core(\KK,\CC) $ \;
		}
		\caption{Generalized \coreHD\  algorithm. }
		\label{alg:hd}
	\end{algorithm}

With a proper book-keeping for the set $\mathcal{M}$ and dynamic updating of the $k$-core the running time of \coreHD\  on graphs with bounded degree is $O(N)$. The algorithm can be implemented such that in each iteration exactly one node is removed: if a node of degree smaller than $k$ is present, it is removed, else a node of highest degree is removed. Thus running the algorithm reduces to keeping track of the degree of each node. If the largest degree is $O(1)$ this can be done in $O(1)$ steps by binning all the nodes of equal degree. If a node is removed only the degrees of all its $O(1)$ neighbours must be moved to the new adequate bins. To review and test our implementation of the \coreHD\ algorithm see the open depository \cite{demo_git_Christian}.

\subsection{Reduction into a random process on the degree distribution \label{sec:reduction}}

In the next several sections, we derive closed-form deterministic equations for the macroscopic behavior of the \coreHD\  algorithm in the limit of large random graphs taken from the configuration model.
This is possible because, when the \coreHD\  algorithm is applied to a random graph from the configuration model (parameterized by its degree distribution), the result (conditioned on the new degree distribution) is also distributed according to the configuration model.
Thus, one can analyze the \coreHD\  algorithm by tracking the evolution of the degree distribution.

In particular, the behaviour of the \coreHD\  procedure averaged over the graph $\mathcal{G}$ can be described explicitly in terms of the following process involving colored balls in an urn~\cite{Pfister14analysis}. 
At time step $n$ there will be $N_n$ balls in the urn, each of which carries a color $d_i$, with $d_i \in \{1,\dots,d \} $ and $d$ being the maximum degree in the graph at the corresponding time step. At any time step $n$ there are $v_q(n)$ balls of color $q$. The colors of the balls are initialized in such a way that at time $n=0$, the number of balls, $N_0$, is the size of the $k$-core of the original graph $\mathcal{G}$ and the initial values of their colors are chosen from the degree distribution of the $k$-core (relation of this to the original degree distribution is clarified later).

In a first step, called ``removal''  (line 3-6 in Alg.~\ref{alg:hd}), one ball is drawn among the $v_d(n)$ balls of maximum degree (color $d$) uniformly at random. Next $d$ balls $\{i_1,i_2,\dots,i_d\}$ are drawn with colors following the excess degree distribution of the graph. The excess degree distribution $P(q) = q Q(q) / c$ of a graph of degree distribution $Q(q)$ and average degree $c$ gives the probability that an outgoing edge from one node is incoming to another node of degree $q$.  To conclude the first step, each of the $d$ balls is replaced by a ball of color $d_{i_j}-1$ (we assume that there are no double edges).
In a second step, called ``trimming''  (line 7 in Alg.~\ref{alg:hd}), we compute the $k$-core of the current graph. In the urn-model this is equivalent to repeatedly applying the following procedure until $v_{i}=0 \ \forall \ i<k$: draw a ball of color $q \in \{ 1,\dots,k-1 \}$ and relabel $q$ other balls chosen according to the excess degree distribution. Thus, we obtain a random process that depends purely on the degree distribution.
Note that in this process we used the fact that the graph was randomly drawn from the configurations model with degree distribution $Q(q)$.

One difficulty, when analyzing the above process, is to chose the right observables. In the previous paragraph the nodes were used as observables. However, equally, one might consider the process in terms of the edges of the graph. As outlined in the previous paragraph, it is important to keep track of the excess degree distribution. Henceforth we will be working with the \emph{half-edges} to simplify the analysis.

To reinterpret the above process in terms of half-edges let $h_i$ be the total number of half-edges that are connected to nodes of degree $i$ at the current iteration. Furthermore, we distinguish nodes of degree smaller than $k$ from all the others. To do so we \emph{adapt our index notation in what follows} and identify $h_{k_<} \coloneqq \sum_{i<k} h_i$ and denote the sum over the entries of a vector $\bm{h}(n) = (h_{k_<}(n),h_k(n),\dots,h_d(n))^T$ as $\left| \bm{h}(n) \right|$. Finally let us also define the unit vectors $\mathbf{e}_{k_<} = (1,0,\dots,0)^T$, $\mathbf{e}_{k} = (0,1,0,\dots,0)^T$, \dots, $\mathbf{e}_d=(0,\dots,0,1) ^T$. Each ball now represents a half-edge and its color is according to the degree of the node that this half-edge is connected to.

The two steps (trimming and removal) can be described in terms of half-edges as follows. 
We start with the ``removal'' step (line 3-6 in Alg.~\ref{alg:hd}). It can be recast in the following rule
	\begin{align}
	\begin{split}
	& (i) \hspace{0.31cm} \bm{h} \gets \bm{h} - d\cdot \mathbf{e}_d 
	\\
	& (ii) \hspace{0.2cm} \text{Repeat $d$ times: } \bm{h} \gets \bm{h} + \mathbf{A} \bm{z}
	\end{split}
	\label{eq:iterate_hd_1}
	\end{align}
where the vector $\bm{z} \in {\mathbb R}^{d-k+2} $ is a random vector that has zeros everywhere except in one of the $d-k+2$ directions, in which it carries a one. The probability that $\bm{z}$ is pointing in direction at iteration $n$ is given by the excess degree distribution $h_q(n)  / \left| \bm{h}(n) \right|$ for $q = \, k_<, k, k+1,  \dots, d$. 
When a node of degree $d$ is removed from the graph, together with its half-edges, the remaining cavity leaves behind some dangling half-edges that are pruned away in step $(ii)$ using the following ``relabelling'' matrix
	\begin{equation}
	\mathbf{A} = 
		\begin{pmatrix}
		-1 & k-1 & & &  \\
		  & -k & k & &  \\
		 & & \ddots & \ddots & \\
		 & & & -(d-1) & d-1 \\
		 & & & & -d
		\end{pmatrix}
		\in \mathbb{R}^{(d-k+2)\times(d-k+2)}
		\, .
	\label{eq:relabeling_matrix}
	\end{equation}

Analogously, the ``trimming'' step (line 7 in Alg.~\ref{alg:hd}) can be cast in the following update rule where step $(i)$ removes a \emph{single half-edge} of degree $<k$ and subsequently step $(ii)$ trims away the dangling cavity half-edge
	\begin{align}
	\begin{split}
	& \text{while} \ h_{k_<} > 0 \  \text{iterate:} 
	\\
	&  (i) \hspace{0.31cm} \bm{h} \gets \bm{h} - \mathbf{e}_{k_<}
	\\
	& (ii) \hspace{0.2cm}  \bm{h} \gets \bm{h} + \mathbf{A} \bm{z} 
	\, .
	\end{split}
	\label{eq:iterate_hd_2}
	\end{align}
where again the position $q$ where to place the one in the random variable $\bm{z}$ is chosen from the current excess degree distribution $h_q(n)  / \left| \bm{h} (n) \right|$ for $q= \, k_<, k,k+1, \dots, d$.	

The advantage of working with a representation in terms of half-edges is that we do not need to distinguish the different edges of color ``$k_<$''.
Further $\left| \bm{h}(n) \right|$ is deterministic because each column of (\ref{eq:relabeling_matrix}) sums to the same constant. During the removal step, eq.~(\ref{eq:iterate_hd_1}), we remove $2d$ half-edges and in one iteration of the trimming step, eq.~(\ref{eq:iterate_hd_2}), we remove $2$ half-edges (resp. $d$ edges and one edge). However, the running time of the second step has become a random variable. We have effectively traded the randomness in $\left| \bm{h}(n) \right|$ for randomness in the running time. For now we have simply shifted the problem into controlling the randomness in the running time. In section \ref{sec:operator_limit} it will be shown that transitioning to continuous time resolves this issue, after averaging, by determination of the running time $\delta t$ as a function of $\mathbb{E}\bm{h}(t)$.

This alternating process is also related to low-complexity algorithms for solving $K$-SAT problems~\cite{Achlioptas-stoc00,Achlioptas-tcs01,MonassonCocco01}.
These $K$-SAT solution methods alternate between guessing variables, which may create new unit clauses, and trimming unit clauses via unit clause propagation.
Due to this connection, the differential equation analyses for these two problems are somewhat similar.
 
\subsection{Taking the average over randomness}

As the equations stand in (\ref{eq:iterate_hd_1}) and (\ref{eq:iterate_hd_2}) they define a random process
that behaves just as Alg.~\ref{alg:hd} on a random graph $\mathcal{G}$, but with $\bm{z}$ and the stopping time of the trimming step implicitly containing all randomness. In terms of the urn with balls representing half-edges, the random variable $\bm{z}$ indicates the color of the second half-edge that is left behind after the first was removed.
We denote the average over $\bm{z}$ as
	\[
	\overline{\bullet} \coloneqq \mathbb{E}_{\bm{z}} \left[ \bullet \right]
	\, .
	\] 
Performing the average over the randomness per se only yields the average behaviour of the algorithm. In section \ref{subsec:rigorous} it is shown that the stochastic process concentrates around its average in the continuous limit.

Next the combination of steps $(i)$ \emph{and} $(ii)$ in eq.~(\ref{eq:iterate_hd_1}) for the ``removal'' and eq.~(\ref{eq:iterate_hd_2}) for the ``trimming'' is considered. 
In terms of half-edges we remove $2d$ half-edges in one iteration of (\ref{eq:iterate_hd_1}) and $2$ half-edges in one iteration of (\ref{eq:iterate_hd_2}). 
In order to write the average of the removal step, we recall that the probability that one half-edge is connected to a color $q\in \{ k_<,k,k+1,\dots,d\}$ is given by the excess degree distribution$h_q(n) / \left| \bm{h}(n) \right|$. 
In the large system limit the average drift of a full removal step can be written as
	\begin{align}
	\overline{\bm{h}}(n+1) 
	&= 
	\left( \bm{1} + \frac{1}{\left| \overline{\bm{h}}(n)\right|-d-(d-1)} \mathbf{A} \right) 
	\cdots 
	\left( \bm{1} + \frac{1}{\left| \overline{\bm{h}}(n)\right|-d} \mathbf{A} \right)
	\left( \overline{\bm{h}}(n) - d \, \mathbf{e}_d \right)
	\\
	&= 
	\left( \bm{1} + \sum_{j=0}^{d-1} \frac{1}{\left| \overline{\bm{h}}(n)\right|-d-j}  \mathbf{A} \right)
	\overline{\bm{h}}(n) 
	+ \left( \sum_{k=2}^{d} c_k \mathbf{A}^k  \right)	\overline{\bm{h}}(n) 
	-d\, \mathbf{e}_d - d \left( \sum_{k=1}^{d} \tilde{c}_k \mathbf{A}^k  \right) \mathbf{e}_d
	\\
	&= 
	\left( \bm{1} + \sum_{j=0}^{d-1} \frac{1}{\left| \overline{\bm{h}}(n)\right| - d -j }  \mathbf{A} \right)
	\overline{\bm{h}}(n) -d \, \mathbf{e}_d 
	+ O(\frac{1}{\left| \overline{\bm{h}}(n)\right|})
	\\
	&= 
	\left( \bm{1} + \frac{d}{\left| \overline{\bm{h}}(n)\right|}  \mathbf{A} \right)
	\overline{\bm{h}}(n) 
	- d \, \mathbf{e}_d
	+ O(\frac{1}{\left| \overline{\bm{h}}(n)\right|})
	\, ,
	\end{align}
where $\bm{1}$ represents the identity matrix.
In the above estimate, we use $d$ intermediate steps to transition from $n\to n+1$, that is the removal of a whole degree $d$ node.
We assume that $d$ is $O(1)$. It then follows that the coefficients $c_k$ and $\tilde{c}_k$ are $O(\left| \overline{\bm{h}}(n)\right|^{-k})$. The last line follows from a similar estimate for the leading term in the sum.

The average removal step can now be written as
	\begin{align}
	\overline{\bm{h}}(n+1) =  \overline{\bm{h}}(n) + \mathbf{A}_{d} \, \frac{\overline{\bm{h}}(n)}{{\left| \overline{\bm{h}}(n) \right|}}
	\, .
	\label{eq:averaged_hd_1}
	\end{align}
with the effective average drift matrix
	\begin{align}
	\mathbf{A}_{d} &\coloneqq d (\mathbf{A}+\mathbf{B}_d)
	\, ,
	\label{eq:drift_matrix_hd}
	\end{align}
where the matrix $\mathbf{B}_d$ has all entries in the last row equal to $-1$ and zeros everywhere else, such that $\mathbf{B}_d\bm{v}=-\mathbf{e}_d$ for a non-negative, normalized vector $\bm{v}$.
Similarly, taking the average in one trimming time step (\ref{eq:iterate_hd_2}) yields the following averaged version 
	\begin{align}
	\overline{\bm{h}}(n+1) = \overline{\bm{h}}(n) + \mathbf{A}_{k_<} \, \frac{\overline{\bm{h}}(n)}{{\left| \overline{\bm{h}}(n) \right|}}
	\, .
	\label{eq:averaged_hd_2}
	\end{align}
For the trimming step the effective drift is simply
	\begin{align}
	\mathbf{A}_{k_<} &\coloneqq  \mathbf{A}+\mathbf{B}_{k_<}
	\, .
	\label{eq:drift_matrix_trim}
	\end{align}
where now $\mathbf{B}_{k_<}$ has all its entries in the first row equal to $-1$ and zeros everywhere else.

We emphasize that the two processes (\ref{eq:averaged_hd_1}) and (\ref{eq:averaged_hd_2}), while acting on the same vector, are separate processes and the latter (\ref{eq:averaged_hd_2}) needs be repeated until the stopping condition $\overline{h}_{k_<}=0$ is hit. Note also, that in the trimming process, one iteration $n\to n+1$ indicates the deletion of a single edge, while it indicates the deletion of a whole node in the removal process.

\subsection{Operator and Continuous Limits \label{sec:operator_limit}}
\label{sec:oplimit}

As discussed at the end of Sec.~\ref{sec:reduction}, a key observation, by virtue of which we can proceed, is that $\left|{\bm{h}}(n)\right|$ is deterministic (and hence equal to its average) during both the removal and trimming steps. This is due to the structure of $\mathbf{A}_{k_<}$ and $\mathbf{A}_d$ that have columns sums independent of the row index:
\[
q_{k_<}\equiv \sum_{i} [\mathbf{A}_{k_<}]_{ij}= -2 
\hspace{0.5cm} \text{and} \hspace{0.5cm}
q_d \equiv \sum_{i} [\mathbf{A}_d]_{ij} = -2d
\]
The only randomness occurs in the stopping time of the trimming step.

In this section the transition to the continuous time-variable $t$ is performed. To that end we define the scaled process 
	\begin{equation}
	\boldsymbol{\eta}(t) = \lim_{N_0\to\infty}	\frac{1}{N_0} \, \overline{\bm{h}}(t N_0)
	\end{equation}
and presume that the derivative $\boldsymbol{\eta} ' (t)$ is equal to its expected change. Here $N_0$ stands for the initial number of vertices in the graph.

Before proceeding to the analysis of \coreHD, let us first describe the solution for the two processes (removal and trimming) as if they were running  separately. Let us indicate the removal process (\ref{eq:averaged_hd_1}) and trimming processes  (\ref{eq:averaged_hd_2}) with subscripts $\alpha = d$ and $\alpha = k_<$ respectively. It then follows from (\ref{eq:averaged_hd_1}) and (\ref{eq:averaged_hd_2}) that the expected change is equal to
\begin{equation}
	\boldsymbol{\eta}^{\boldsymbol{\prime}}_{\alpha}(t) = \mathbf{A}_{\alpha} \, \frac{\boldsymbol\eta_\alpha(t)}{\left| \boldsymbol\eta_\alpha (t) \right|}
	\label{eq:separate_ODE_hd}
	\, .
\end{equation}
Owing to the deterministic nature of the drift terms $A_\alpha$ we have
\begin{equation}
	{\left| \boldsymbol\eta_\alpha (t) \right|} = 1+q_\alpha t
\end{equation}  
and the above differential equation can be solved explicitly as 
\begin{equation}
	\boldsymbol\eta_\alpha(t) =  \exp\left[ \frac{\mathbf{A}_{\alpha}}{q_\alpha}  \ln \left( 1+q_\alpha t \right) \right]  \boldsymbol\eta_\alpha(0)
	\, .
	\label{eq:analytic_solution}
\end{equation}
We have thus obtained an analytic description of each of the two separate processes (\ref{eq:averaged_hd_1}) and (\ref{eq:averaged_hd_2}).

Note that this implies that we can analytically predict the expected value of the random process in which all nodes of degree $d$ are removed from a graph successively until none remains and then all nodes of degree smaller than $k$ are trimmed. This already provides improved upper bounds on the size of the minimal contagious sets, that we report in Table \ref{tbl:bounds} (cf.\ ``two stages''). This ``two stages'' upper bound has the advantage that no numerical solution of differential equations is required.
The goal, however, is to analyze the \coreHD\  procedure that merges the two processes into one, as this should further improve the bounds.

Crucially the running time of the trimming process depends on the final state of the removal process, i.e. the differential equations become nonlinear in $\boldsymbol\eta(t)$. As a consequence, they can no longer be brought into a simple analytically solvable form (at least as far as we were able to tell). 
To derive the differential equations that combine the removal and trimming processes and track \coreHD\ we will be working with the operators that are obtained from the ``iterations'', (\ref{eq:averaged_hd_1}) and (\ref{eq:averaged_hd_2}), in the continuous limit (\ref{eq:separate_ODE_hd}). 
The evolution within an infinitesimally small removal step ($\alpha=d$), respectively trimming step ($\alpha=k_<$), follows from (\ref{eq:separate_ODE_hd}) to 
	\begin{equation}
	\boldsymbol\eta_{\alpha}(t+\delta t) = 
	\hat{\mathbf{T}}_\alpha (\delta t,t) 
	\boldsymbol\eta_{\alpha}(t)
	\, ,
	\label{eq:dt_step}
	\end{equation}
where we defined the propagator 
	\[
	\hat{\mathbf{T}}_\alpha (\delta t,t) = \left( \bm{1}+ \frac{ \mathbf{A}_\alpha }{\left| \boldsymbol\eta_\alpha (t) \right|} \, \delta t \right)
	\, .
	\]

In what follows we will be considering the removal and trimming processes to belong to one and the same process and therefore $\boldsymbol\eta(t)$ will no longer be carrying subscripts.  Upon combination a full step in the combined process in terms of the operators then reads 
	\begin{equation}
	\boldsymbol\eta\left( t+\delta t \right) = \hat{\mathbf{T}}_{k_<} (\hat{\delta t},t+\delta t) \, \hat{\mathbf{T}}_d (\delta t,t) \, \boldsymbol\eta \left( t \right)
	\, .
	\label{eq:propagator}
	\end{equation}
Note that one infinitesimal time step is the continuous equivalent of the removal of one degree $d$ node, together with the resulting cascade of degree $<k$ nodes. It is for that reason that the final continuous time after which the $k$-core vanishes will be directly related to the size of the set $\mathcal{D}$ in Algorithm~\ref{alg:hd}. 
Note also that in $\hat{\mathbf{T}}_{k_<}(\hat{\delta t},t+\delta t)$ we replaced the running time with the operator $\hat{\delta t}$. It acts on a state to its right and can be computed from the condition that \emph{all} the nodes of degree smaller $k$ must be trimmed after completing a full infinitesimal step of the combined process, so that
	\begin{equation}
	\eta_{k_<}\left( t+\delta t \right) \overset{!}{=} 0
	\, .
	\label{eq:nonlinear_condition}
	\end{equation}
Requiring this condition in eq.~(\ref{eq:propagator}) we get from an expansion to linear order in $\delta t$ that 
	\begin{equation}
	\hat{\delta t}  = \delta t \cdot \left(  - \frac{\left[ \mathbf{A}_d \boldsymbol\eta(t) \right]_{k_<}}{\left[ \mathbf{A}_{k_<} \boldsymbol\eta(t) \right]_{k_<}} \right)
	\, .
	\label{eq:running_time}
	\end{equation}

Once again, we recall that $[\bm{v}]_{k_<}$ denotes the first component of the vector $\bm{v}$.
We can now use this equation to eliminate the dependence on $\hat{\delta t}$ in the combined operator $\hat{\mathbf{T}}_{k_<} (\hat{\delta t},t) \, \hat{\mathbf{T}}_d (\delta t,t)$. 
Using (\ref{eq:dt_step}) and keeping only first order terms in $\delta t$ in (\ref{eq:propagator}) yields
	\begin{equation}
	\boldsymbol\eta(t+\delta t) = \boldsymbol\eta(t) + \left[   - \frac{\left[ \mathbf{A}_d \boldsymbol\eta(t) \right]_{k_<}}{\left[ \mathbf{A}_{k_<} \boldsymbol\eta(t) \right]_{k_<}} \, \mathbf{A}_{k_<} + \mathbf{A}_d \right] \, \frac{\boldsymbol\eta(t)}{\left| \boldsymbol\eta (t) \right|} \, \delta t,
	\label{eq:propagator_contiuous}
	\end{equation}
which leads us to the following differential equation
	\begin{equation}
	\boldsymbol{\eta'}(t) = \left[ \varphi\left( \boldsymbol\eta(t) \right) \mathbf{A}_{k_<} + \mathbf{A}_d \right] \frac{\boldsymbol\eta(t)}{\left| \boldsymbol\eta (t) \right|}
	\, .
	\label{eq:non_linear_dgl}
	\end{equation}
The nonlinearity $\varphi(\cdot)$ is directly linked to the trimming time and defined as 	
	\begin{equation}
	\varphi\left( \boldsymbol\eta(t) \right) \equiv - \frac{\left[ \mathbf{A}_d \boldsymbol\eta(t) \right]_{k_<}}{\left[ \mathbf{A}_{k_<} \boldsymbol\eta(t) \right]_{k_<}}
	=
	-
	\frac{d\left( -\eta_{k_<}(t) +(k-1)\eta_k(t) \right)}{-\left| \boldsymbol\eta (t) \right|-\eta_{k_<}(t)+(k-1)\eta_k(t)}
	=
	\frac{d\, (k-1) \eta_k(t)}
	{\left| \boldsymbol\eta (t) \right|-(k-1)\eta_k(t)}
	\, .
	\label{eq:nonlinear_term}
	\end{equation}
To obtain the last equality in (\ref{eq:nonlinear_term}) we used the trimming condition, i.e. set $\eta_{k_<}(t)=0$. 
The initial conditions are such that the process starts from the $k$-core of the original graph. This is achieved by solving (\ref{eq:analytic_solution}), with $\alpha=k_<$, for arbitrary initial degree distribution $\boldsymbol{\eta}(0)$ (without bounded maximum degree) until $\eta_{k_<}(t)=0$. 
Hence, the set of differential equations defined by (\ref{eq:non_linear_dgl}) can be written explicitly as
	\begin{eqnarray}
	\eta_{k_<}'(t) &=& 0 \label{eq:dgl_final1}
	\\
	\eta_{i}'(t) &=& d\cdot i \cdot \frac{- \eta_i (t) + \eta_{i+1} (t)}{\left| \boldsymbol{\eta}(t) \right| -(k-1)\eta_k (t)}
	\hspace{0.8cm} \textrm{for } k \leq i<d 
	\\
	\eta_d'(t) &=&-d  + d^2 \cdot \frac{- \eta_d (t) }{\left| \boldsymbol{\eta}(t) \right| -(k-1)\eta_k (t)}
	\, .
	\label{eq:dgl_final3}
	\end{eqnarray}

\subsection{Rigorous Analysis }
\label{subsec:rigorous}

A rigorous analysis of the $k$-core peeling process for Erd\H{o}s-R\'{e}nyi graphs is presented in~\cite{Pittel-jctb96}.
This analysis is based on the Wormald approach~\cite{Wormald1995} but the presentation in~\cite{Pittel-jctb96} is more complicated because it derives an exact formula for the threshold and there are technical challenges as the process terminates.
For random graphs drawn from the configuration model, however, the standard Wormald approach~\cite{Wormald1995} provides a simple and rigorous numerical method for tracking the macroscopic dynamics of the peeling algorithm when the maximum degree is bounded and the degree distribution remains positive.
The primary difficulty occurs near termination when the fraction of degree $<\!k$ edges becomes very small. 

The peeling process in \coreHD\ alternates between deleting maximum-degree nodes and degree $<\!k$ edges and this introduces a similar problem for the Wormald method.
In particular, the \coreHD\  peeling schedule typically reduces the fraction of maximum-degree nodes to zero at some point and then the maximum degree jumps downward.
At this jump, the drift equation is not Lipschitz continuous and does not satisfy the necessary conditions in~\cite{Wormald1995}.
More generally, whenever there are hard preferences between node/edge removal options (i.e., first delete largest degree, then 2nd largest degree, etc.), the same problem can occur.

For \coreHD, one solution is to use weighted preferences where the removal of degree $<\!k$ edges is most preferred, then removal of degree-$d$ nodes, then degree $d-1$ nodes, and so on.
In this case, the drift equation remains Lipschitz continuous if the weights are finite but the model dynamics only approximate the \coreHD\  algorithm dynamics.
In theory, one can increase the weights to approximate hard preferences but, in practice, the resulting differential equations become too numerically unstable to solve efficiently.
A better approach is to use the operator limit described in Section~\ref{sec:oplimit}.
Making this rigorous, however, requires a slightly more complicated argument.

The key argument is that the $k$-core peeling step (after each maximum-degree node removal) does not last too long or affect too many edges in the graph.
A very similar argument (dubbed the Lazy-Server Lemma) is used in the analysis of low-complexity algorithms for solving $K$-SAT problems~\cite{Achlioptas-stoc00,Achlioptas-tcs01}.
In both cases, a suitable stability (or drift) condition is required.
In this work, we use the following lemma.
\begin{lem}\label{lem:stoptime}
For some $\delta>0$, suppose $\bm{h}(n)$ satisfies $h_k (n) \leq \frac{1-3\delta}{k-1} |\bm{h}(n)|$ and $M \triangleq |\bm{h}(n)|\geq M_0(\delta)$.
Consider the \coreHD\  process where a maximum-degree node is removed and then the trimming operation continues until there are no edges with degree less than $k$ (see~\eqref{eq:iterate_hd_2}).
Let the random variable $T$ denote the total number of trimming steps, which also equals the total number edges removed by the trimming operation.
Then, we have
\[\Pr \big(T > (k-1)^2 \delta^{-2} \ln M \big) \leq M^{-2}.\]

\end{lem}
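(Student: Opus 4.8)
The plan is to recognize the trimming phase as a one-dimensional random walk in the coordinate $h_{k_<}(n)$ with strictly negative drift, and to bound its hitting time of $0$ by a Chernoff/exponential-martingale argument — exactly the structure of the Lazy-Server Lemma mentioned above. Since $T$ is the number of trimming steps needed to drive $h_{k_<}$ down to $0$, for any deterministic horizon $T_0$ we have the inclusion $\{T>T_0\}\subseteq\{h_{k_<}(T_0)\ge 1\}$, so it suffices to bound $\Pr(h_{k_<}(T_0)\ge 1)$ with $T_0=(k-1)^2\delta^{-2}\ln M$. First I would read off the one-step drift of $h_{k_<}$ directly from the update rule~\eqref{eq:iterate_hd_2} together with the first row of the matrix $\mathbf{A}$ in~\eqref{eq:relabeling_matrix}: step $(i)$ always contributes $-1$, and step $(ii)$ contributes $-1$, $k-1$, or $0$ according to whether the relabelled half-edge has color $k_<$, $k$, or $>k$, with probabilities $h_{k_<}/|\bm h|,\ h_k/|\bm h|,\ \sum_{q>k}h_q/|\bm h|$. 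Hence the increment $\Delta h_{k_<}$ takes values in $\{-2,-1,k-2\}$ and
\[
\mathbb{E}[\Delta h_{k_<}\mid\mathcal{F}_n]=\frac{(k-1)h_k(n)-|\bm h(n)|-h_{k_<}(n)}{|\bm h(n)|}\le -3\delta,
\]
where the last inequality is precisely the hypothesis $h_k(n)\le\tfrac{1-3\delta}{k-1}|\bm h(n)|$.

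The main obstacle is that this drift bound involves $h_k$ and $|\bm h|$, which themselves change as trimming proceeds, so the inequality must hold for the whole duration rather than only at the start; naively this is circular, since controlling the duration is the goal. I would break the circularity with a purely deterministic estimate: each trimming step alters $h_k$ by at most $k$ and $|\bm h|$ by exactly $-2$, so after any $n\le T_0$ steps the quantity $(k-1)h_k-|\bm h|$ can increase by at most $k^2T_0=k^2(k-1)^2\delta^{-2}\ln M$ relative to the start of the phase. Choosing $M_0(\delta)$ large enough that this is $\le\delta M$ for all $M\ge M_0(\delta)$, the initial slack $(k-1)h_k-|\bm h|\le-3\delta M$ degrades only to $(k-1)h_k-|\bm h|\le-2\delta M\le-2\delta|\bm h|$, and therefore the drift stays $\le-2\delta$ uniformly over all sample paths and all $n\le T_0$. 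The $O(1)$ perturbation of $h_k,|\bm h|$ caused by the preceding node removal, together with the resulting initial value $h_{k_<}(0)\le d(k-1)=O(1)$ of the walk, are absorbed into the same choice of $M_0(\delta)$.

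With the uniform drift in hand, the concentration is standard. Because the increments lie in an interval of length $k$, Hoeffding's lemma gives $\mathbb{E}[e^{\lambda\Delta h_{k_<}}\mid\mathcal F_n]\le\exp(-2\delta\lambda+\lambda^2k^2/8)$ for every $\lambda>0$ and $n\le T_0$; iterating this shows that $e^{\lambda h_{k_<}(n)}\exp\!\big(n(2\delta\lambda-\lambda^2k^2/8)\big)$ is a supermartingale up to the stopping time $T$, and a Markov/optional-stopping step yields
\[
\Pr\big(h_{k_<}(T_0)\ge 1\big)\le e^{\lambda(h_{k_<}(0)-1)}\exp\!\Big(-T_0\big(2\delta\lambda-\tfrac{\lambda^2k^2}{8}\big)\Big).
\]
Optimizing at $\lambda=8\delta/k^2$ gives decay $\exp(-8\delta^2T_0/k^2)$ with an $O(1)$ prefactor (since $h_{k_<}(0)=O(1)$). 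Substituting $T_0=(k-1)^2\delta^{-2}\ln M$ makes the exponent $8(k-1)^2k^{-2}\ln M\ge2\ln M$ for every $k\ge2$, which is exactly the claimed bound $M^{-2}$; the boundary case $k=2$ (where the increments are in fact nonpositive, so $h_{k_<}$ is nonincreasing and reaches $0$ after at most $h_{k_<}(0)=O(1)$ strict decreases, each occurring with probability $\ge3\delta$) can be handled separately and gives an even stronger bound. Beyond these routine Chernoff estimates, the only genuine work is the deterministic drift-stability bookkeeping of the second paragraph, which is what fixes the admissible constant $M_0(\delta)$.
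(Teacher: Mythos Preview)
Your proof is correct and follows essentially the same approach as the paper: both recognize the trimming phase as a one-dimensional walk on $h_{k_<}$ with strictly negative drift, establish that the drift stays uniformly bounded away from zero over the whole horizon $T_0$ via a deterministic worst-case estimate on how $h_k$ and $|\bm h|$ can evolve, and then apply a Hoeffding-type concentration bound. The paper packages the last step as stochastic domination by an i.i.d.\ two-valued process with increments in $\{-1,k-2\}$ (merging your $-2$ into $-1$) rather than a supermartingale argument, which gives range $k-1$ instead of your $k$ and thereby absorbs the $O(1)$ prefactor directly into the exponent without the need for your separate treatment of $k=2$.
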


\begin{proof}
See Appendix~\ref{appendix_lem_stoptime_proof}.
\end{proof}

\begin{lem}
\label{lem:diffeq}
Let $\boldsymbol{\eta}(t)$ be the solution to the operator-limit differential equation~\eqref{eq:non_linear_dgl} at time $t$ starting from $\boldsymbol{\eta}(0) = \frac{1}{N} \bm{h}(0)$.
Assume, for some $\delta>0$, there is a $t_0 > 0$ such that\footnote{The first condition implies that the denominator of~\eqref{eq:nonlinear_term} is positive.} $\eta_k (t) \leq \frac{1-4\delta}{k-1} \left| \boldsymbol\eta (t) \right|$ and $\left| \boldsymbol\eta (t) \right|\geq \delta$ for all $t\in[0,t_0]$.
Then, there is $C>0$ such that
\[\Pr \left( \sup_{t\in[0,t_0]} \left\| \boldsymbol{\eta}(t) - \frac{1}{N} \bm{h} (\lfloor Nt \rfloor) \right\| > \frac{C}{N^{1/4}} \right) = O \left( N^{-1}\right).\]
\end{lem}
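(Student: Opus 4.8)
The plan is to cast this as an application of Wormald's differential-equation method~\cite{Wormald1995} to the combined removal-plus-trimming process, indexed by combined steps $n=\lfloor Nt\rfloor$ (one maximum-degree node removed per step) and rescaled by the initial size $N$. Wormald's theorem needs three ingredients: a uniform bound $\beta$ on the one-step change of $\bm{h}$; a trend condition asserting that the expected one-step change matches the drift $f(\boldsymbol\eta)=[\varphi(\boldsymbol\eta)\mathbf{A}_{k_<}+\mathbf{A}_d]\,\boldsymbol\eta/|\boldsymbol\eta|$ of~\eqref{eq:non_linear_dgl} up to a small additive error; and a Lipschitz bound on $f$ along the trajectory. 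Granting these, the standard conclusion delivers uniform closeness with error $O(N^{-1/4})$ and failure probability $O(N^{-1})$, which is exactly the claim.

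The obstruction to invoking Wormald directly is that a single combined step performs a \emph{random} number $T$ of trimming substeps, so the per-step change of $\bm{h}$ is not deterministically bounded. This is precisely the role of Lemma~\ref{lem:stoptime} (the analogue of the Lazy-Server Lemma): it gives $T=O(\log M)$ with probability at least $1-M^{-2}$. I would union-bound this over the at most $t_0 N$ combined steps and, using $M=|\bm{h}(n)|=\Theta(N)$, obtain a good event $\mathcal{A}$ of probability $1-O(N^{-1})$ on which every combined step alters each coordinate of $\bm{h}$ by at most $\beta=O(\log N)$ --- each trimming substep touching only $O(1)$ half-edges and the node removal only $O(d)=O(1)$. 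On $\mathcal{A}$ the bounded-difference hypothesis holds, and an Azuma--Hoeffding estimate for the resulting martingale controls the fluctuations of $\frac1N\bm{h}(\lfloor Nt\rfloor)$ about its conditional expectation.

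For the trend condition I would reuse the averaging computation of Section~\ref{sec:oplimit}: the expected net effect of a full combined step equals $f$ evaluated at $\frac1N\bm{h}(n)$ up to an $O(1/M)=O(1/N)$ finite-size correction to the excess-degree distribution. The Lipschitz bound follows from the two standing hypotheses, since $\eta_k(t)\le\frac{1-4\delta}{k-1}|\boldsymbol\eta(t)|$ together with $|\boldsymbol\eta(t)|\ge\delta$ force the denominator $|\boldsymbol\eta|-(k-1)\eta_k\ge 4\delta|\boldsymbol\eta|\ge 4\delta^2>0$ in $\varphi$ of~\eqref{eq:nonlinear_term} to stay bounded away from zero; hence $f$ is smooth, with bounded derivatives, on a tube around the trajectory.

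The step I expect to be the main obstacle is the circularity between Lemma~\ref{lem:stoptime} and the concentration itself: the lemma is stated for the \emph{discrete} quantities $h_k(n)\le\frac{1-3\delta}{k-1}|\bm{h}(n)|$ and $|\bm{h}(n)|\ge M_0$, whereas our hypotheses constrain only the limiting $\boldsymbol\eta$. I would break the loop with a stopping-time bootstrap. Let $\tau$ be the first combined step at which $\frac1N\bm{h}$ leaves the enlarged region $\{\,h_k\le\frac{1-3\delta}{k-1}|\bm{h}|,\ |\bm{h}|\ge \tfrac{\delta}{2}N\,\}$, run the Wormald/Azuma argument only up to $\tau\wedge\lfloor t_0 N\rfloor$, and note that on the concentration event $\frac1N\bm{h}(n)$ stays within $O(N^{-1/4})$ of $\boldsymbol\eta(n/N)$. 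Because $\boldsymbol\eta$ obeys the defining inequalities with the \emph{larger} margin $4\delta$, the $O(N^{-1/4})$ tube around it lies inside the $3\delta$-region for all large $N$, forcing $\tau>\lfloor t_0 N\rfloor$ and closing the argument. The one-$\delta$ gap between the $4\delta$ hypothesis on $\boldsymbol\eta$ and the $3\delta$ threshold required by Lemma~\ref{lem:stoptime} is exactly the slack that powers this bootstrap.
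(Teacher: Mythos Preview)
Your proposal is correct and follows essentially the same route as the paper's proof sketch: invoke Wormald's differential-equation method, use Lemma~\ref{lem:stoptime} to control the random trimming length (so that per-step changes are $O(\log N)$ except with probability $O(N^{-2})$), verify the trend hypothesis via the averaging computation of Section~\ref{sec:oplimit}, and check Lipschitz continuity of the drift from the positivity of the denominator in~\eqref{eq:nonlinear_term}. The paper packages the rare-large-jump issue directly into the $\gamma(N)$ parameter of the generalized Wormald theorem~\cite[Theorem~5.1]{Wormald-lara99} rather than first conditioning on a good event as you do, and it handles the discrete/continuous circularity implicitly by working inside a fixed valid set $D$, whereas you spell out the stopping-time bootstrap that exploits the $4\delta$-versus-$3\delta$ slack; these are minor technical variations on the same argument.
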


\begin{proof}
See Appendix~\ref{appendix_lem_diffeq_proof}.
\end{proof}

\begin{thm}
The multistage \coreHD\  process converges, with high probability as $N\to \infty$, to the piecewise solution of the operator-limit differential equation.
\end{thm}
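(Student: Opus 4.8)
The plan is to decompose the multistage \coreHD\ process into a finite sequence of stages, apply the single-stage concentration of Lemma~\ref{lem:diffeq} within each stage, and control the behaviour at the transitions between stages using Lemma~\ref{lem:stoptime}. Because the degree distribution is bounded with some maximum degree $d_{\max}$, and because the maximum degree present in the core can only decrease as nodes are removed, the process passes through at most $d_{\max}-k+1$ stages. In the $j$-th stage the maximum degree equals a constant $d_j$ and the macroscopic dynamics are governed by the operator-limit equations~(\ref{eq:dgl_final1}--\ref{eq:dgl_final3}) with $d$ replaced by $d_j$; the concatenation of these per-stage solutions is exactly the piecewise solution appearing in the statement.

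First I would fix $\delta>0$ and verify, directly from the deterministic trajectory, that in the interior of each stage the hypotheses of Lemma~\ref{lem:diffeq} hold, namely $\eta_k(t)\le\frac{1-4\delta}{k-1}\left|\boldsymbol\eta(t)\right|$ (so the denominator of $\varphi$ in~(\ref{eq:nonlinear_term}) stays bounded away from zero and the drift is Lipschitz) and $\left|\boldsymbol\eta(t)\right|\ge\delta$ (so we stay away from termination). On such an interior interval Lemma~\ref{lem:diffeq} gives that $\frac{1}{N}\bm{h}(\lfloor Nt\rfloor)$ stays within $C N^{-1/4}$ of $\boldsymbol\eta(t)$ uniformly, with probability $1-O(N^{-1})$. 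I would then propagate this guarantee from one stage to the next by taking the endpoint of stage $j$ as the (random, but $N^{-1/4}$-close) initial condition for stage $j+1$; since the per-stage flow is Lipschitz in its initial data on the interior and there are only finitely many stages, the accumulated deviation remains $O(N^{-1/4})$ and a union bound over the $O(1)$ stages keeps the overall failure probability at $O(N^{-1})$.

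The main obstacle, and where Lemma~\ref{lem:stoptime} is essential, is the transition region near each stage boundary where $\eta_{d_j}(t)\downarrow 0$, the maximum degree jumps to $d_j-1$, and the drift in~(\ref{eq:dgl_final3}) loses Lipschitz continuity, so that Lemma~\ref{lem:diffeq} cannot be extended all the way to the boundary. I would handle this by a stopping-time and restart argument: stop the single-stage analysis once $\eta_{d_j}$ first drops below a small threshold $\varepsilon$, and bound separately the number of nodes and edges removed while the process crosses into the re-established regime at the next maximum degree. The point is that this crossing consumes only a fraction of edges that vanishes as $\varepsilon\to0$, so on the macroscopic time scale the transition is effectively instantaneous and does not displace the trajectory. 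It is crucial that the intervening trimming cascades do not themselves move a macroscopic number of edges, and Lemma~\ref{lem:stoptime} supplies precisely this: every cascade lasts at most $(k-1)^2\delta^{-2}\ln M$ steps with probability $1-M^{-2}$. Here the slightly stronger margin $\frac{1-4\delta}{k-1}$ imposed on the deterministic trajectory in Lemma~\ref{lem:diffeq} guarantees that the random $\bm{h}(n)$, once $N^{-1/4}$-close, still satisfies the weaker hypothesis $h_k(n)\le\frac{1-3\delta}{k-1}\left|\bm{h}(n)\right|$ of Lemma~\ref{lem:stoptime}, so the two lemmas interlock; a union bound over the $O(N)$ node removals, with $M=\Theta(N)$ throughout the interior, then shows that no single cascade is long, with probability $1-O(N^{-1})$.

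Finally I would assemble the pieces. Letting $\varepsilon\to0$ after $N\to\infty$ closes the transition gaps, so the discrete process tracks the piecewise ODE solution uniformly up to any time at which $\left|\boldsymbol\eta\right|\ge\delta$. The terminal regime, where $\left|\boldsymbol\eta\right|\to0$ and both lemmas degrade, is controlled by noting that it involves only $o(N)$ remaining edges and hence contributes negligibly to the total count $\left|\mathcal{D}\right|$; letting $\delta\to0$ then yields convergence of the stopping time, and thus of the normalized contagious-set size, to its deterministic limit. The hardest technical point throughout is making this transition argument quantitative enough that the $\varepsilon$-neighbourhoods of the finitely many jumps collectively carry a negligible number of edges while the $N^{-1/4}$ fluctuation bound of Lemma~\ref{lem:diffeq} is preserved across each restart.
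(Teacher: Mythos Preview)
Your proposal is correct and follows essentially the same approach as the paper: both decompose the process into at most $d_{\max}-k+1$ stages indexed by the current maximum degree, invoke Lemma~\ref{lem:diffeq} (and through it Wormald's theorem) on each stage, stop and restart at the stage boundaries where the supply of maximum-degree nodes is exhausted, and piece together the finitely many solutions. Your treatment of the transitions (the $\varepsilon$-threshold argument, the explicit interlocking of the $3\delta$/$4\delta$ margins, and the separate handling of the terminal regime) is in fact more detailed than the paper's own sketch, which simply asserts that stopping and restarting suffices; but the underlying strategy is identical.
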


\begin{proof}[Sketch of Proof]
The first step is recalling that the standard $k$-core peeling algorithm results in graph distributed according to the configuration model with a degree distribution that, with high probability as $N\to\infty$, converges to the solution of the standard $k$-core differential equation~\cite{Pittel-jctb96}.
If $k$-core is not empty, then the \coreHD\  process is started.
To satisfy the conditions of~\cite[Theorem~5.1]{Wormald-lara99}, the process is stopped and restarted each time the supply of maximum-degree nodes is exhausted.
Since the maximum degree is finite, this process can be repeated to piece together the overall solution.
Using Lemma~\ref{lem:diffeq}, we can apply~\cite[Theorem~5.1]{Wormald-lara99} at each stage to show the \coreHD\  process follows the differential equation~\eqref{eq:non_linear_dgl}.
It is important to note that the cited theorem is more general than the typical fluid-limit approach and allows for unbounded jumps in the process as long as they occur with low enough probability.  
\end{proof}

\subsection{Evaluating the results}

Here we clarify how the upper bound is extracted from the equations previously derived. 
Note that the nonlinearity (\ref{eq:nonlinear_term}) exhibits a singularity when 
	\begin{equation}
	\left| \boldsymbol\eta (t) \right|=(k-1)\eta_k(t)
	\, ,
	\label{eq:stopping_condition}
	\end{equation}
that is, when the gain (r.h.s.) and loss (l.h.s.) terms in the trimming process are equal. This can be either trivially true when no more nodes are left, $\left| \boldsymbol\eta (t) \right|=0$, or it corresponds to an infinite trimming time.
The latter is precisely the point where the size of the $k$-core jumps downward discontinuously, whereas the first case is linked to a continuous disappearance of the $k$-core. Either of these two cases define the \emph{stopping time} $t_{\textrm{s}}$ of the differential process (\ref{eq:non_linear_dgl}). By construction the stopping time $t_{\textrm{s}}$ provides the size of the set $\mathcal{D}$ that contains all the nodes the \coreHD\ algorithm removed to break up the $k$-core. It hence also \emph{provides an upper bound} on the size of the minimal contagious set, i.e.\ the smallest such set that removes the $k$-core. 

Note that $\boldsymbol\eta(t_{\textrm{s}}-\epsilon)$ (for an infinitesimally small $\epsilon$) gives the size of the $k$-core, right before it disappears. For all the cases investigated in this paper we found that solving eqs.~(\ref{eq:dgl_final1}--\ref{eq:dgl_final3}) for $k=2$ yields a continuous disappearance of the $2$-core, and for $k\ge 3$ the stopping criteria yield discontinuous disappearance of the $k$-core.

\begin{algorithm}[H] 
	\SetAlgoNoLine
	\LinesNumbered
		\KwData{Initial degree distribution $Q(q)$; $k$}
		\KwResult{The relative size of set of removed nodes, $t_{\text{s}}$.}
		\textbf{Initialize half-edges}: 
		$\boldsymbol{\eta}(0) = 
		\left(\sum_{i=1}^{k-1} i\cdot Q(i),k\cdot Q(k), (k+1)\cdot Q(k+1),\hdots,d\cdot Q(d) \right)^T$
		\;
		\textbf{Compute distribution of half-edges in the $k$-core}: 
		$\boldsymbol\eta_0 = \exp\left[ -\frac{\mathbf{A}_{k_<}}{2} \cdot \ln \left( 1-2 t_0 \right) \right] \boldsymbol\eta(0)$  with $t_0$ such that $\eta_{k_<}(t_0) = 0$ and $\mathbf{A}_{k_<}$ defined in (\ref{eq:drift_matrix_trim})\;
		\textbf{Set}: $d \gets$ the degree associated to the last non-zero component of $\boldsymbol{\eta}_0$; $t_{\text{s}} \gets 0$  \;
		\While{$\textrm{stop} \neq \textrm{true}$}{
			\textbf{Solve}: 
			\[
				\boldsymbol{\eta'}(t) = \left[ \frac{d(k-1)\eta_k(t)}
				{\left| \boldsymbol\eta (t) \right|-(k-1)\eta_k(t)} \mathbf{A}_{k_<} + \mathbf{A}_d \right] \frac{\boldsymbol\eta(t)}{\left| \boldsymbol\eta (t) \right|}
			\] 
			with initial condition $\boldsymbol{\eta}_0$. Until either $\eta_d(t^{*})=0$ or $\left| \boldsymbol\eta (t^{*})\right|=(k-1)\eta_k(t^{*})$
			\tcp*{$\mathbf{A}_{k_<}$ and $\mathbf{A}_d$ defined by (\ref{eq:drift_matrix_trim}) and (\ref{eq:drift_matrix_hd}) with $d$ set to the current largest degree}
			\textbf{Update}: $t_{\text{s}}\gets t_{\text{s}}+t^{*}$ \;
				\eIf{$\left| \boldsymbol\eta (t^{*}) \right|=0$ or $\left| \boldsymbol\eta (t^{*}) \right|=(k-1)\eta_k(t^{*})$}{
				stop $\gets$ true\;
				}{
				$\boldsymbol\eta_0 \gets \eta_{k_{<}:d-1}(t^{*})$ \;
				$d \gets d-1$ \;
				}
			}
		\caption{Analysis of \coreHD. Recall that indices are $k_<$ are referring to the first component of a vector, $k$ to the second and so forth until the last component $d$. }
		\label{alg:procedure}
	\end{algorithm}

In order to solve the above set of ODEs numerically, we first use equation (\ref{eq:analytic_solution}) to trim away nodes of color $k_<$, i.e. reduce the graph to it's $k$-core. Then we use equation (\ref{eq:propagator_contiuous}) recursively, until the last component $\eta_d$ is zero. Subsequently we reduce $\boldsymbol \eta$ by removing its last component, send $d\to d-1$, adapt the drift term (\ref{eq:drift_matrix_hd}) and repeat with the reduced $\boldsymbol \eta$ and initial condition given by the result of the previous step. All this is performed until the stopping condition (\ref{eq:stopping_condition}) is reached. We summarize the procedure in a pseudo-code in Algorithm~\ref{alg:procedure}, for our code that solves the differential equations see open depository \cite{demo_git_Christian}.

\paragraph*{Example: two-cores on three regular random graphs.}  

For a simple example of how to extract the upper bound consider the following case.
We have $k=2$ and $d=3$ and we set $k_<=1$, then the differential equation in (\ref{eq:non_linear_dgl}) becomes
	\begin{eqnarray}
	\eta^{\prime}_1(t) &=& 0 \\
	\eta^{\prime}_2(t) &=& 6 \cdot \left( 1 - \frac{\eta_2(t)}{\eta_3(t)} \right) \\
	\eta^{\prime}_3(t) &=& -12
	\, ,
	\label{eqn:example}
	\end{eqnarray}
with initial condition $\boldsymbol{\eta}(0)=d\cdot \mathbf{e}_d$ because there are $d N$ half-edges, all connected to nodes of degree $d$ initially. The equations are readily solved 
	\begin{eqnarray}
	\eta_1(t) &=& 0 \\
	\eta_2(t) &=& 3 \cdot \sqrt{1-4t} \cdot \left( 1- \sqrt{1-4t} \right) \\
	\eta_3(t) &=& 3\cdot (1 - 4t)
	\, ;	
	\end{eqnarray}
According to (\ref{eq:stopping_condition}) the stopping time is $t_s = 1/4$, i.e. $\left| \mathcal{D} \right|=N/4$, which suggests that the decycling number ($2$-core) for cubic random graphs is bounded by $N/4 + o(N)$. In accordance with Theorem 1.1 in~\cite{BauWormaldZhou02} this bound coincides with the actual decycling number. For $d>3$ the lower and upper bounds do not coincide an the stopping time resulting from our approach only provides an upper bound. 
\\

Finding a closed form expression for the generic case is more involved and we did not manage to do it. However, very reliable numerical resolution is possible. The simplest approach to the differential equations is to work directly with (\ref{eq:propagator_contiuous}) as indicated in Algorithm~\ref{alg:procedure}.

\subsection{\coreHD\  analyzed and compared with existing results}

In this section we evaluate the upper bound on minimal contagious set obtained by our analysis of \coreHD.  In Figure \ref{fig:compare_dyn_fig} we compare the fraction of nodes of a given degree that are in the graph during the \coreHD procedure obtained from solving the differential equations and overlay them with averaged timelines obtained from direct simulations of the algorithm.  

\begin{figure}
	\begin{center}
	\includegraphics[scale=1]{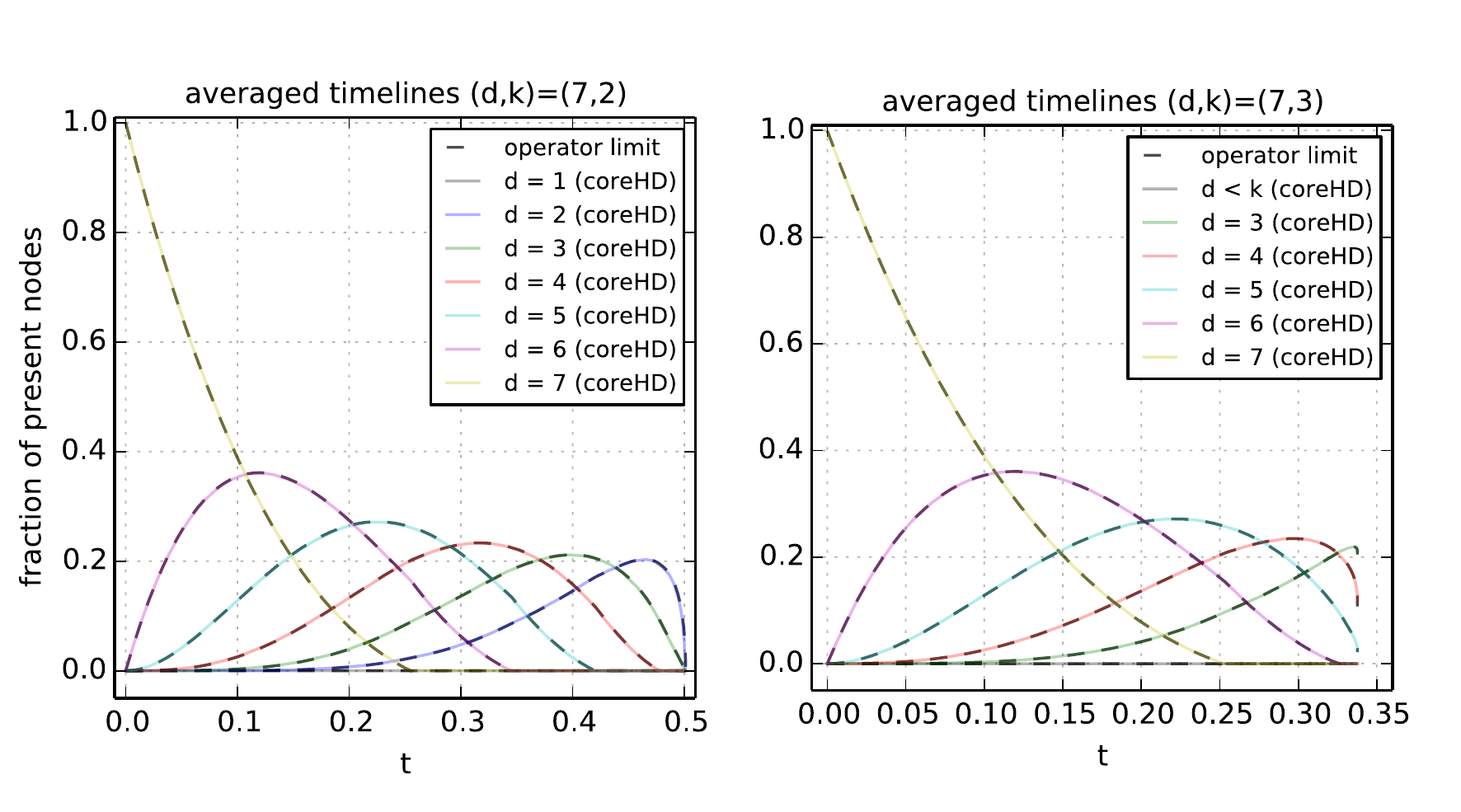}
	\end{center}
	\caption{Comparison of the solution of the differential equation Algorithm~\ref{alg:procedure} with a direct simulations of the \coreHD\  algorithm on several graphs of $2^{14}$ nodes, here with $d=7$ and $k=2$ (left hand side) and $k=3$ (right hand side). 
	The time $t$ is directly related to the fraction of removed nodes of degree $\geq k$. For $k=2$ the $k$-core disappears continuously at $t_s=0.5006$, for $k=3$ discontinuously at $t_s=0.3376$.
	}
	\label{fig:compare_dyn_fig}
	\end{figure}

Table \ref{tbl:compare_dgl_sim} then compares direct numerical simulations of the \coreHD\  algorithm with the prediction that is obtained from the differential equations. The two are in excellent agreement, for the analysis higher precision can be obtained without much effort. When analyzing Erd\H{o}s-R\'{e}nyi graphs it is necessary to restrict the largest degree to avoid an infinite set of differential equations. The error resulting from that is exponentially small in the maximum degree, and hence in practice insignificant.

	\begin{table}
	\centering
	\begin{tabular}{|c|c||c|c|c||c|c|c|c|c|}
            \hline 
            $d$ & $k$ & RRG \coreHD & RRG theory & final core & ERG \coreHD & ERG theory & final core \tabularnewline
            \hline 
            \hline 
          
            $3$ & $2$ & $.2500(0)$ & $.25000$ & $0$ & $.1480(7)$ & $.14809$ & $0$ \tabularnewline
          
            \hline 
          
            $4$ & $2$ & $.3462(3)$ & $.34624$ & $0$ & $.2263(0)$ & $.22634$ & $0$ \tabularnewline
            \hline 
             & $3$ & $.0962(5)$ & $.09623$ & $.67349$ & $.0388(5)$ & $.03887$ & $.32718$ \tabularnewline
          
            \hline 
          
            $5$ & $2$ & $.4110(7)$ & $.41105$ & $0$ & $.2924(8)$ & $.29240$ & $0$ \tabularnewline
            \hline 
             & $3$ & $.2083(6)$ & $.20832$ & $.48428$ & $.1068(2)$ & $.10679$ & $.33105$ \tabularnewline
            \hline 
             & $4$ & $.0476(9)$ & $.04764$ & $.85960$& $-$ & $-$ & $-$ \tabularnewline
            
            \hline 
            
            $6$ & $2$ & $.4606(2)$ & $.46063$ & $0$ & $.3480(8)$ & $.34816$ & $0$ \tabularnewline
            \hline 
             & $3$ & $.2811(2)$ & $.28107$ & $.40916$ & $.1700(1)$ & $.16994$ & $.34032$ \tabularnewline
            \hline 
             & $4$ & $.1401(1)$ & $.14007$ & $.68672$ & $.0466(0)$ & $.04662$ & $.46582$ \tabularnewline
            \hline 
             & $5$ & $.0280(7)$ & $.02809$ & $.92410$ & $-$ & $-$ & $-$ \tabularnewline
             
            \hline 
            
            $7$ & $2$ & $.5006(0)$ & $.50060$ & $0$ & $.3954(7)$ & $.39554$ & $0$ \tabularnewline
            \hline 
             & $3$ & $.3376(1)$ & $.33757$ & $.36179$ & $.2260(0)$ & $.22597$ & $.32878$ \tabularnewline
            \hline 
             & $4$ & $.2115(0)$ & $.21150$ & $.58779$ & $.1043(2)$ & $.10429$ & $.46788$ \tabularnewline
            \hline 
             & $5$ & $.1010(2)$ & $.10100$ & $.78903$ & $.0088(5)$ & $.00882$ & $.54284$ \tabularnewline
            \hline 
             & $6$ & $.0184(6)$ & $.01842$ & $.95038$ & $-$ & $-$ & $-$ \tabularnewline
             
            \hline 
	\end{tabular}
	\caption{Comparison between direct simulations of the \coreHD\  Algorithm \ref{alg:hd} and the theoretical results for random graphs of different degree $d$ and core-index $k$. The left hand side of the table reports results for random regular graphs, the right hand side for Erd\H{o}s-R\'{e}nyi random graphs of the corresponding average degree.  The simulation results are averaged over $50$ runs on graphs of size $N=2^{19}$. The fractions of removed nodes agree up to the last stable digit (digits in brackets are fluctuating due to finite size effects). We also list the size of the $k$-core just before it disappears (in the column ``final core'').}
	\label{tbl:compare_dgl_sim}
	\end{table}

Confident with this cross-check of our theory, we proceed to compare with other theoretical results. As stated in the introduction, Guggiola and Semerjian~\cite{GuggiolaSemerjian2015} have derived the size of the minimal contagious sets for random regular graphs using the cavity method. At the same time, several rigorous upper bounds on size of the minimal contagious set exist~\cite{Ackerman2010,Reichman2012,Coja-Oghlan14}.
In particular the authors of~\cite{BauWormaldZhou02} provide upper bounds for the decycling number ($2$-core) that are based on an analysis similar to ours, but of a different algorithm.\footnote{The numerical values, provided for the bounds in~\cite{BauWormaldZhou02} are actually not correct, as the authors realized and corrected in a later paper~\cite{HoppenWormald08}. We acknowledge the help of Guilhem Semerjian who pointed this out.} In table \ref{tbl:bounds} we compare the results from~\cite{BauWormaldZhou02} with the ones obtained from our analysis and the presumably exact results from~\cite{GuggiolaSemerjian2015}. We clearly see that while \coreHD\  is not quite reaching the optimal performance, yet the improvement over the existing upped bound is considerable. 
    \begin{table}
    \centering
        \begin{tabular}{|c||c|c|c|c|}
            \hline 
            $d$ (degree) & Bau, Wormald, Zhou & two stages & \coreHD & cavity method \tabularnewline
            \hline 
            \hline 
            $3$ & $.25$ & $.3750$ & $.25$ & $.25$ \tabularnewline
            \hline 
            $4$ & $.3955$ & $.3913$ & $.3462$ & $.3333$ \tabularnewline
            \hline 
            $5$ & $.4731$ & $.4465$ & $.4110$ & $.3785$ \tabularnewline
            \hline 
            $6$ & $.5289$ & $.4911$ & $.4606$ & $.4227$ \tabularnewline
            \hline 
            $7$ & $.5717$ & $.5278$ & $.5006$ & $.4602$ \tabularnewline
            \hline 
        \end{tabular}
    \caption{Best known upper bounds on the minimal decycling sets by~\cite{BauWormaldZhou02}, compared to the upper bounds obtained from our analysis when all nodes of maximum degree are removed before the graph is trimmed back to its $2$-core (two stages), and to our analysis of \coreHD\ . The last column gives  the non-algorithmic cavity method results of Guggiola and Semerjian~\cite{GuggiolaSemerjian2015} that provide (non-rigorously) the actual optimal sizes. }
    \label{tbl:bounds}
    \end{table}

In table \ref{tbl:compare_algorithms} we quantify the gap between our upper bound and the results of~\cite{GuggiolaSemerjian2015} for larger values of $k$. Besides its simplicity, the \coreHD\  algorithm provides significantly better upper bounds than those known before. Clearly, we only consider a limited class of random graphs here and the bounds remain away from the conjectured optimum. However, it is worth emphasizing that previous analyses were often based on much more involved algorithms. The analysis in~\cite{Coja-Oghlan14} or the procedure in~\cite{BauWormaldZhou02} are both based on algorithms that are more difficult to analyze.

\section{Improving \coreHD}\label{sec:improving}

\begin{figure}
	\begin{center}
		\includegraphics[scale=0.9]{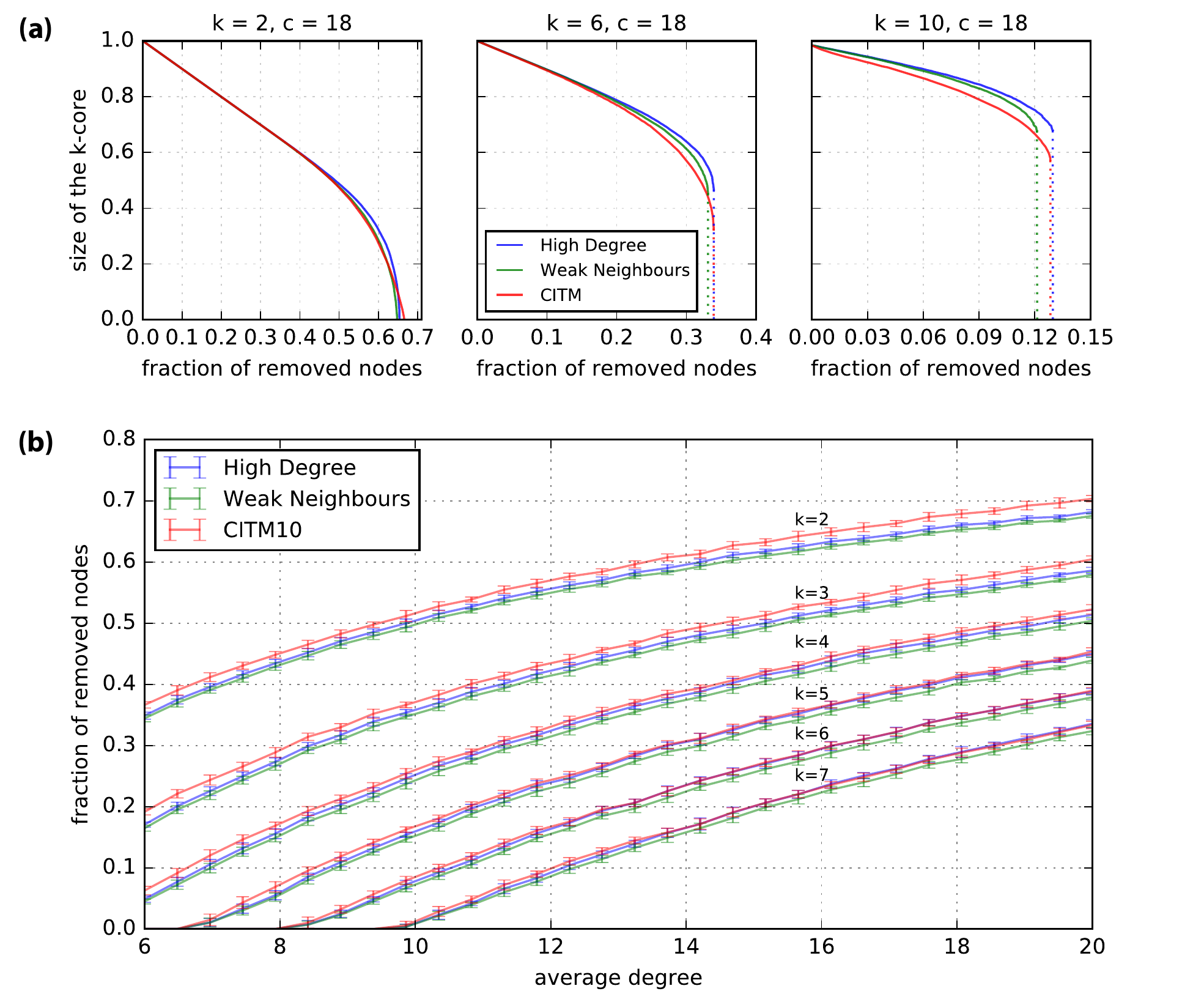}
	\end{center}
	\caption{(a) A typical transition of three algorithms, with $L$ for the \CITM\ set to the size of the graph. When $k$ is increased the gap between the \CITM\ algorithm and the \coreHD\  algorithm closes and they become close in terms of performance. However, there remains a gap to the \WN\   algorithm. 
		(b) Comparison of the three algorithms (here \CITM\  with $L=10$) on ER graphs for different $k$ and vary the average degree. Compare also table \ref{tbl:compare_algorithms} for results on regular random graphs. 
	}
	\label{fig:compare_er}
\end{figure}

The main focus of this paper has been, up until now, the analysis of \coreHD\ on random networks. 
Rather naturally the question of how to improve over it raises. In this section we evaluate the possibility of a simple local strategy that systematically improves over the \coreHD\  performance. We show that introducing additional local information about the direct neighbors of a node into the decision process can significantly improve the performance, while essentially conserving the time complexity. 

The \coreHD\  algorithm (Alg.~\ref{alg:hd}) does not take into account any information of the neighborhood of a node. The theoretical analysis in the previous section owes its simplicity to this fact. However, the idea behind \coreHD\  can be extended to the next order by considering the structure of the nearest neighbors of a node. Once we include the nearest neighbors the number of options is large. Our aim is not to make an extensive study of all possible strategies, but rather to point out some heuristic arguments that lead to improvement.

According to the previous section, selecting high degree nodes is a good strategy. Another natural approach is a greedy strategy that, in each step, selects a node such that the caused cascade of nodes dropping out of the core is maximized~\cite{GuggiolaSemerjian2015,Coja-Oghlan14,MakseMorone2017}.
In the following we list some strategies that aim to somehow combine these two on the level of the direct neighborhood of a node. For graphs with $O(1)$ maximum degree they can all be implemented in $O(N)$ running time.
\\

\paragraph*{\textsc{corehdld}:}
This approach selects high degree nodes, but then discriminates those that have many neighbors of high degree. The idea is that nodes that have neighbors of large degree might get removed in the trimming procedure preferentially and hence the degree of the node in question will likely decrease. More specifically we implemented the following: 
\begin{itemize}
\item First select the set $\mathcal{M} \gets \{i \in \mathcal{V}_{\mathcal{C}} \mid i = \arg\max_i \left[ d_i \right]\}$ and then update it according to $\mathcal{M} \gets \{i \in \mathcal{M} \mid i = \arg\max_i \left[ \sum_{j\in \partial i}\mathbb{I}\left[d_j<d_i\right] \right] \}$.
\end{itemize}

\paragraph*{\WN:}
The \WN\  strategy aims to remove first those nodes that have high degree \emph{and} low average degree of the neighbors, thus causing a larger trimming step on average. There are different ways to achieve this. We tried two strategies that both yield very similar results on all the cases considered. These two strategies are
\begin{itemize}
\item The order in which the nodes are removed is according to $\arg\max_i d_i - s_i$ with $d_i$ being the degree of node $i$ and $s_i$ the average degree of the neighbors of $i$. 
\item We separate the two steps. First select the set $\mathcal{M} \gets \{i \in \mathcal{V}_{\mathcal{C}} \mid i = \arg\max_i \left[ d_i \right]\}$ and then update it according to $\mathcal{M} \gets \{i \in \mathcal{M} \mid i = \arg\min_i \left[ \sum_{j\in \partial i}d_j \right] \}$.
\end{itemize}
Our implementation of the \WN\ algorithm is available in the open depository \cite{demo_git_Christian}.

\paragraph*{\textsc{corehd-critical}:}
The \textsc{coreHD-critical} combines the \coreHD\  with the vanilla-greedy algorithm on the direct neighbors. Nodes are first selected according to their degree and subsequently among them we remove nodes first that have the largest number of \emph{direct neighbors} that will drop out in the trimming process. 
\begin{itemize}
\item First select the set $\mathcal{M} \gets \{i \in \mathcal{V}_{\mathcal{C}} \mid i = \arg\max_i \left[ d_i \right]\}$ and then update it according to $\mathcal{M} \gets \{i \in \mathcal{M} \mid i = \arg\max_i \left[ \sum_{j\in \partial i}\mathbb{I}\left[d_j\leq k\right] \right] \}$.
\end{itemize}
Finally it is interesting to contrast the stated algorithms with a version in which the high degree selection step is left out, i.e. select $\mathcal{M} \gets \{i \in \mathcal{V}_{\mathcal{C}} \mid i = \arg\max_i \left[ \sum_{j\in \partial i}\mathbb{I}\left[d_j\leq k\right] \right]\}$ and then remove at random from this set.

\begin{table}[H]
	\centering
	\begin{tabular}{|c|c||c|c|c|c|c||c}
		\hline 
		$d$ & $k$ & \CITM-10 & \coreHD & \WN & cavity method \tabularnewline
		\hline 
		\hline 
		
		$3$ & $2$ & $.254$ & $.2500$ & $.2500$ & $.25000$\tabularnewline
		
		\hline 
		
		$4$ & $2$ & $.365$ & $.3462$ & $.3376$ &$.33333$\tabularnewline
		\hline 
		& $3$ & $.077$ & $.0963$ &$.0744$ & $.04633$\tabularnewline
		
		\hline 
		
		$5$ & $2$ & $.437$ & $.4111$ &$.3965$ & $.37847$\tabularnewline
		\hline 
		& $3$ & $.205$ & $.2084$ &$.1876$ & $.16667$\tabularnewline
		\hline 
		& $4$ & $.032$ & $.0477$ & $.0277$ & $.01326$\tabularnewline
		
		\hline
		
		$6$ & $2$ & $.493$ & $.4606$ &$.4438$ & $.42270$\tabularnewline
		\hline 
		& $3$ & $.296$ & $.2811$ & $.2644$ &$.25000$\tabularnewline
		\hline 
		& $4$ & $.121$ & $.1401$ & $.1081$ &$.07625$\tabularnewline
		\hline 
		& $5$ & $.019$ & $.0281$ &$.0134$ & $.00582$\tabularnewline
		
		\hline 
		
		$7$ & $2$ & $.540$ & $.5006$ & $.4831$ & $.46023$\tabularnewline
		\hline 
		& $3$ & $.362$ & $.3376$ &$.3206$ & $.30009$\tabularnewline
		\hline 
		& $4$ & $.207$ & $.2115$ &$.1813$ & $.15006$\tabularnewline
		\hline 
		& $5$ & $.081$ & $.1010$ & $.0686$ & $.04289$\tabularnewline 
		\hline 
		& $6$ & $.013$ & $.0185$ & $.0077$ & $.00317$\tabularnewline
		
		\hline 
	\end{tabular}
	\caption{Performance results of three algorithms: \CITM\  (with $L=10$), \coreHD, \WN\ (with $\arg\max_i d_i-s_i$). As well as the conjectured optimal results obtained non-constructively via the cavity method in~\cite{GuggiolaSemerjian2015}.}
	\label{tbl:compare_algorithms}
\end{table}

Let us summarize the results. First, we find that all above strategies improve over the \coreHD\  algorithm (at least in some regimes). Second, we find that among the different strategies the \WN\ algorithm performs best.

While we have systematic numerical evidence that the \WN\  strategy performs best, it is not clear which are the effects responsible. What we can say for sure is that the generic locally greedy procedure of trying to reduce the size of the $k$-core at every step is not optimal.

The most commonly considered greedy procedures do consider information not only from the direct neighborhood of a node. The vanilla-greedy approach removes nodes according to the size of cascade that is caused by their removal.\footnote{Note that such greedy algorithms have $O(N^2)$ running time.} Nodes are removed first that cause the largest cascade of nodes dropping out in the subsequent trimming process. The high degree version of this approach additionally focusses on the nodes of maximum degree, e.g.\ by picking nodes according to $\arg\max_i \left[ d_i+s_i \right]$ with $s_i$ now being the size of the corona, this time not limited to the direct neighborhood, but rather the total graph. Here we merely report that this greedy procedures tend to perform worse than \WN\  when $k\ll d-1$ and becomes comparable when $k \approx d-1$.

Next we contrast the \coreHD\  performance and the \WN\  performance with the performance of recently introduced~\cite{MakseMorone2017} algorithm \CITM-$L$ that uses neighborhood up to distance $L$ and shown in~\cite{MakseMorone2017} to outperform a range of more basic algorithm. 
In figure \ref{fig:compare_er} we compare the performances of the \WN\   algorithm, \coreHD, and \CITM-$10$ (beyond $L=10$ resulted in negligible improvements) on Erd\H{o}s-R\'{e}nyi graphs.
We observe that the \WN\  algorithm outperforms not only the \coreHD\  algorithm, but also the \CITM\  algorithm. We conclude that the optimal use of information about the neighborhood is not given by the \CITM\  algorithm. What the optimal strategy, that only uses local information up to a given distance, remains an intriguing open problem for future work.

To discuss a little more the results observed in Figure \ref{fig:compare_er}, for small $k$ the \coreHD\  algorithm outperforms the \CITM\  algorithm, but when $k$ is increase the performance gap between them shrinks, and for large $k$ (e.g. in Fig. \ref{fig:compare_er} part (a)) CIMT outperforms \coreHD. Both \coreHD\  and \CITM\ are outperformed by the \WN\ algorithm in all the cases we tested. In addition to the results on Erd\H{o}s-R\'{e}nyi graphs in figure \ref{fig:compare_er} we summarize and compare all the three considered algorithms and put them in perspective to the cavity method  results on regular random graphs in table \ref{tbl:compare_algorithms}.

Finally, following up on~\cite{Braunstein2016} we mention that in applications of practical interest it is possible to improve each of the mentioned algorithm by adding an additional random process that attempts to re-insert nodes. Consider the set $\mathcal{S} \subset \mathcal{G}$ of nodes that, when removed, yield an empty $k$-core. Then find back the nodes in $\mathcal{S}$ that can be re-inserted into the graph without causing the $k$-core to re-appear.

\section{Conclusion}

In this paper we study the problem of what is the smallest set of nodes to be removed from a graph so that the resulting graph has an empty $k$-core. The main contribution of this paper is the theoretical analysis of the performance of the \coreHD\  algorithm, proposed originally in~\cite{ZdeborovaZhangZhou2016}, for random graphs from the configuration model with bounded maximum degree. To that end a deterministic description of the associated random process on sparse random graphs is derived that leads to a set of non-linear ordinary differential equations. From the stopping time -- the time at which the $k$-core disappears -- of these differential equations we extract an upper bound on the minimal size of the contagious set of the underlying graph ensemble. The derived upper bounds are considerably better than previously known ones.

Next to the theoretical analysis of \coreHD\, we proposed and investigated numerically several other simple strategies to improve over the \coreHD\  algorithm. All these strategies conserve the essential running time of $O(N)$ on graphs with maximum degree of $O(1)$.
Among our proposals we observe the best to be the \WN\ algorithm. It is based on selecting large degree nodes from the $k$-core that have neighbors of low average degree. In numerical experiments on random regular and Erd\H{o}s-R\'enyi graphs we show that the \WN\ algorithm outperforms \coreHD, as well as other scalable state-of-the-art algorithms~\cite{MakseMorone2017}.

There are several directions that the present paper does not explore and that would be interesting project for future work. One is generalization of the differential equations analysis to the \WN\ algorithm. This should in principle be possible for the price of having to track the number of neigbors of a given type, thus increasing considerably the number of variables in the set of differential equations. Another interesting future direction is optimization of the removal rule using node and its neigborhood up to a distance $L$. It is an open problem if there is a method that outperforms the WN algorithm and uses only information from nearest neighbors. Yet another direction is comparison of the algorithmic performance to message passing algorithms as developed in~\cite{altarelli2013large,altarelli2013optimizing}. Actually the work of~\cite{MakseMorone2017} compares to some version of message passing algorithms and finds that the CIMT algorithm is comparable. Our impression is, however, that along the lines of~\cite{Braunstein2016} where the message passing algorithm was optimized for the dismantling problem, analogous optimization should be possible for the removal of the $k$-core, yielding better results. This is an interesting future project.

\section{Acknowledgement}

We are thankful to the authors of~\cite{MakseMorone2017} for sharing their implementation of the \CITM\ algorithm with us. We would further like to thank Guilhem Semerjian for kind help, comments and indications to related work. LZ acknowledges funding from the European Research Council (ERC) under the European Union’s Horizon 2020 research and innovation programme (grant agreement No 714608 - SMiLe). This work is supported by the ``IDI 2015'' project funded by the IDEX Paris-Saclay, ANR-11-IDEX-0003-02

\bibliography{ref_k_core}

\begin{thebibliography}{26}%
\makeatletter
\providecommand \@ifxundefined [1]{%
 \@ifx{#1\undefined}
}%
\providecommand \@ifnum [1]{%
 \ifnum #1\expandafter \@firstoftwo
 \else \expandafter \@secondoftwo
 \fi
}%
\providecommand \@ifx [1]{%
 \ifx #1\expandafter \@firstoftwo
 \else \expandafter \@secondoftwo
 \fi
}%
\providecommand \natexlab [1]{#1}%
\providecommand \enquote  [1]{``#1''}%
\providecommand \bibnamefont  [1]{#1}%
\providecommand \bibfnamefont [1]{#1}%
\providecommand \citenamefont [1]{#1}%
\providecommand \href@noop [0]{\@secondoftwo}%
\providecommand \href [0]{\begingroup \@sanitize@url \@href}%
\providecommand \@href[1]{\@@startlink{#1}\@@href}%
\providecommand \@@href[1]{\endgroup#1\@@endlink}%
\providecommand \@sanitize@url [0]{\catcode `\\12\catcode `\$12\catcode
  `\&12\catcode `\#12\catcode `\^12\catcode `\_12\catcode `\%12\relax}%
\providecommand \@@startlink[1]{}%
\providecommand \@@endlink[0]{}%
\providecommand \url  [0]{\begingroup\@sanitize@url \@url }%
\providecommand \@url [1]{\endgroup\@href {#1}{\urlprefix }}%
\providecommand \urlprefix  [0]{URL }%
\providecommand \Eprint [0]{\href }%
\providecommand \doibase [0]{http://dx.doi.org/}%
\providecommand \selectlanguage [0]{\@gobble}%
\providecommand \bibinfo  [0]{\@secondoftwo}%
\providecommand \bibfield  [0]{\@secondoftwo}%
\providecommand \translation [1]{[#1]}%
\providecommand \BibitemOpen [0]{}%
\providecommand \bibitemStop [0]{}%
\providecommand \bibitemNoStop [0]{.\EOS\space}%
\providecommand \EOS [0]{\spacefactor3000\relax}%
\providecommand \BibitemShut  [1]{\csname bibitem#1\endcsname}%
\let\auto@bib@innerbib\@empty
\bibitem [{\citenamefont {Zdeborov{\'a}}\ \emph {et~al.}(2016)\citenamefont
  {Zdeborov{\'a}}, \citenamefont {Zhang},\ and\ \citenamefont
  {Zhou}}]{ZdeborovaZhangZhou2016}%
  \BibitemOpen
  \bibfield  {author} {\bibinfo {author} {\bibfnamefont {Lenka}\ \bibnamefont
  {Zdeborov{\'a}}}, \bibinfo {author} {\bibfnamefont {Pan}\ \bibnamefont
  {Zhang}}, \ and\ \bibinfo {author} {\bibfnamefont {Hai-Jun}\ \bibnamefont
  {Zhou}},\ }\bibfield  {title} {\enquote {\bibinfo {title} {Fast and simple
  decycling and dismantling of networks},}\ }\href@noop {} {\bibfield
  {journal} {\bibinfo  {journal} {Scientific Reports}\ }\textbf {\bibinfo
  {volume} {6}} (\bibinfo {year} {2016})}\BibitemShut {NoStop}%
\bibitem [{\citenamefont {Granovetter}(1978)}]{granovetter1978threshold}%
  \BibitemOpen
  \bibfield  {author} {\bibinfo {author} {\bibfnamefont {Mark}\ \bibnamefont
  {Granovetter}},\ }\bibfield  {title} {\enquote {\bibinfo {title} {Threshold
  models of collective behavior},}\ }\href@noop {} {\bibfield  {journal}
  {\bibinfo  {journal} {American journal of sociology}\ }\textbf {\bibinfo
  {volume} {83}},\ \bibinfo {pages} {1420--1443} (\bibinfo {year}
  {1978})}\BibitemShut {NoStop}%
\bibitem [{\citenamefont {Chalupa}\ \emph {et~al.}(1979)\citenamefont
  {Chalupa}, \citenamefont {Leath},\ and\ \citenamefont
  {Reich}}]{chalupa1979bootstrap}%
  \BibitemOpen
  \bibfield  {author} {\bibinfo {author} {\bibfnamefont {John}\ \bibnamefont
  {Chalupa}}, \bibinfo {author} {\bibfnamefont {Paul~L}\ \bibnamefont {Leath}},
  \ and\ \bibinfo {author} {\bibfnamefont {Gary~R}\ \bibnamefont {Reich}},\
  }\bibfield  {title} {\enquote {\bibinfo {title} {Bootstrap percolation on a
  bethe lattice},}\ }\href@noop {} {\bibfield  {journal} {\bibinfo  {journal}
  {Journal of Physics C: Solid State Physics}\ }\textbf {\bibinfo {volume}
  {12}},\ \bibinfo {pages} {L31} (\bibinfo {year} {1979})}\BibitemShut
  {NoStop}%
\bibitem [{\citenamefont {Reichman}(2012)}]{Reichman2012}%
  \BibitemOpen
  \bibfield  {author} {\bibinfo {author} {\bibfnamefont {Daniel}\ \bibnamefont
  {Reichman}},\ }\bibfield  {title} {\enquote {\bibinfo {title} {New bounds for
  contagious sets},}\ }\href {\doibase
  http://dx.doi.org/10.1016/j.disc.2012.01.016} {\bibfield  {journal} {\bibinfo
   {journal} {Discrete Mathematics}\ }\textbf {\bibinfo {volume} {312}},\
  \bibinfo {pages} {1812 -- 1814} (\bibinfo {year} {2012})}\BibitemShut
  {NoStop}%
\bibitem [{\citenamefont {Coja-Oghlan}\ \emph {et~al.}(2015)\citenamefont
  {Coja-Oghlan}, \citenamefont {Feige}, \citenamefont {Krivelevich},\ and\
  \citenamefont {Reichman}}]{Coja-Oghlan14}%
  \BibitemOpen
  \bibfield  {author} {\bibinfo {author} {\bibfnamefont {Amin}\ \bibnamefont
  {Coja-Oghlan}}, \bibinfo {author} {\bibfnamefont {Uriel}\ \bibnamefont
  {Feige}}, \bibinfo {author} {\bibfnamefont {Michael}\ \bibnamefont
  {Krivelevich}}, \ and\ \bibinfo {author} {\bibfnamefont {Daniel}\
  \bibnamefont {Reichman}},\ }\bibfield  {title} {\enquote {\bibinfo {title}
  {Contagious sets in expanders},}\ }in\ \href {\doibase
  10.1137/1.9781611973730.131} {\emph {\bibinfo {booktitle} {Proceedings of the
  Twenty-Sixth Annual ACM-SIAM Symposium on Discrete Algorithms}}}\ (\bibinfo
  {year} {2015})\ pp.\ \bibinfo {pages} {1953--1987}\BibitemShut {NoStop}%
\bibitem [{\citenamefont {Guggiola}\ and\ \citenamefont
  {Semerjian}(2015)}]{GuggiolaSemerjian2015}%
  \BibitemOpen
  \bibfield  {author} {\bibinfo {author} {\bibfnamefont {Alberto}\ \bibnamefont
  {Guggiola}}\ and\ \bibinfo {author} {\bibfnamefont {Guilhem}\ \bibnamefont
  {Semerjian}},\ }\bibfield  {title} {\enquote {\bibinfo {title} {Minimal
  contagious sets in random regular graphs},}\ }\href {\doibase
  10.1007/s10955-014-1136-2} {\bibfield  {journal} {\bibinfo  {journal}
  {Journal of Statistical Physics}\ }\textbf {\bibinfo {volume} {158}},\
  \bibinfo {pages} {300--358} (\bibinfo {year} {2015})}\BibitemShut {NoStop}%
\bibitem [{\citenamefont {Dreyer}\ and\ \citenamefont
  {Roberts}(2009)}]{dreyer2009irreversible}%
  \BibitemOpen
  \bibfield  {author} {\bibinfo {author} {\bibfnamefont {Paul~A}\ \bibnamefont
  {Dreyer}}\ and\ \bibinfo {author} {\bibfnamefont {Fred~S}\ \bibnamefont
  {Roberts}},\ }\bibfield  {title} {\enquote {\bibinfo {title} {Irreversible
  k-threshold processes: Graph-theoretical threshold models of the spread of
  disease and of opinion},}\ }\href@noop {} {\bibfield  {journal} {\bibinfo
  {journal} {Discrete Applied Mathematics}\ }\textbf {\bibinfo {volume}
  {157}},\ \bibinfo {pages} {1615--1627} (\bibinfo {year} {2009})}\BibitemShut
  {NoStop}%
\bibitem [{\citenamefont {Altarelli}\ \emph
  {et~al.}(2013{\natexlab{a}})\citenamefont {Altarelli}, \citenamefont
  {Braunstein}, \citenamefont {Dall’Asta},\ and\ \citenamefont
  {Zecchina}}]{altarelli2013optimizing}%
  \BibitemOpen
  \bibfield  {author} {\bibinfo {author} {\bibfnamefont {Fabrizio}\
  \bibnamefont {Altarelli}}, \bibinfo {author} {\bibfnamefont {Alfredo}\
  \bibnamefont {Braunstein}}, \bibinfo {author} {\bibfnamefont {Luca}\
  \bibnamefont {Dall’Asta}}, \ and\ \bibinfo {author} {\bibfnamefont
  {Riccardo}\ \bibnamefont {Zecchina}},\ }\bibfield  {title} {\enquote
  {\bibinfo {title} {Optimizing spread dynamics on graphs by message
  passing},}\ }\href@noop {} {\bibfield  {journal} {\bibinfo  {journal}
  {Journal of Statistical Mechanics: Theory and Experiment}\ }\textbf {\bibinfo
  {volume} {2013}},\ \bibinfo {pages} {P09011} (\bibinfo {year}
  {2013}{\natexlab{a}})}\BibitemShut {NoStop}%
\bibitem [{\citenamefont {Altarelli}\ \emph
  {et~al.}(2013{\natexlab{b}})\citenamefont {Altarelli}, \citenamefont
  {Braunstein}, \citenamefont {Dall’Asta},\ and\ \citenamefont
  {Zecchina}}]{altarelli2013large}%
  \BibitemOpen
  \bibfield  {author} {\bibinfo {author} {\bibfnamefont {Fabrizio}\
  \bibnamefont {Altarelli}}, \bibinfo {author} {\bibfnamefont {Alfredo}\
  \bibnamefont {Braunstein}}, \bibinfo {author} {\bibfnamefont {Luca}\
  \bibnamefont {Dall’Asta}}, \ and\ \bibinfo {author} {\bibfnamefont
  {Riccardo}\ \bibnamefont {Zecchina}},\ }\bibfield  {title} {\enquote
  {\bibinfo {title} {Large deviations of cascade processes on graphs},}\
  }\href@noop {} {\bibfield  {journal} {\bibinfo  {journal} {Physical Review
  E}\ }\textbf {\bibinfo {volume} {87}},\ \bibinfo {pages} {062115} (\bibinfo
  {year} {2013}{\natexlab{b}})}\BibitemShut {NoStop}%
\bibitem [{\citenamefont {Braunstein}\ \emph {et~al.}(2016)\citenamefont
  {Braunstein}, \citenamefont {Dall'Asta}, \citenamefont {Semerjian},\ and\
  \citenamefont {Zdeborov{\'a}}}]{Braunstein2016}%
  \BibitemOpen
  \bibfield  {author} {\bibinfo {author} {\bibfnamefont {Alfredo}\ \bibnamefont
  {Braunstein}}, \bibinfo {author} {\bibfnamefont {Luca}\ \bibnamefont
  {Dall'Asta}}, \bibinfo {author} {\bibfnamefont {Guilhem}\ \bibnamefont
  {Semerjian}}, \ and\ \bibinfo {author} {\bibfnamefont {Lenka}\ \bibnamefont
  {Zdeborov{\'a}}},\ }\bibfield  {title} {\enquote {\bibinfo {title} {Network
  dismantling},}\ }\href {\doibase 10.1073/pnas.1605083113} {\bibfield
  {journal} {\bibinfo  {journal} {Proceedings of the National Academy of
  Sciences}\ }\textbf {\bibinfo {volume} {113}},\ \bibinfo {pages}
  {12368--12373} (\bibinfo {year} {2016})}\BibitemShut {NoStop}%
\bibitem [{\citenamefont {Sen}\ \emph {et~al.}(2017)\citenamefont {Sen},
  \citenamefont {Xian}, \citenamefont {Shaman}, \citenamefont {Morone},\ and\
  \citenamefont {Makse}}]{MakseMorone2017}%
  \BibitemOpen
  \bibfield  {author} {\bibinfo {author} {\bibfnamefont {Pei}\ \bibnamefont
  {Sen}}, \bibinfo {author} {\bibfnamefont {Teng}\ \bibnamefont {Xian}},
  \bibinfo {author} {\bibfnamefont {Jeffrey}\ \bibnamefont {Shaman}}, \bibinfo
  {author} {\bibfnamefont {Flaviano}\ \bibnamefont {Morone}}, \ and\ \bibinfo
  {author} {\bibfnamefont {Hern{\'a}n~A}\ \bibnamefont {Makse}},\ }\bibfield
  {title} {\enquote {\bibinfo {title} {Efficient collective influence
  maximization in cascading processes with first-order transitions},}\
  }\href@noop {} {\bibfield  {journal} {\bibinfo  {journal} {Scientific
  Reports}\ }\textbf {\bibinfo {volume} {7}} (\bibinfo {year}
  {2017})}\BibitemShut {NoStop}%
\bibitem [{\citenamefont {Karp}(1972)}]{karp1972reducibility}%
  \BibitemOpen
  \bibfield  {author} {\bibinfo {author} {\bibfnamefont {Richard~M}\
  \bibnamefont {Karp}},\ }\bibfield  {title} {\enquote {\bibinfo {title}
  {Reducibility among combinatorial problems},}\ }in\ \href@noop {} {\emph
  {\bibinfo {booktitle} {Complexity of computer computations}}}\ (\bibinfo
  {publisher} {Springer},\ \bibinfo {year} {1972})\ pp.\ \bibinfo {pages}
  {85--103}\BibitemShut {NoStop}%
\bibitem [{\citenamefont {Bau}\ \emph {et~al.}(2002)\citenamefont {Bau},
  \citenamefont {Wormald},\ and\ \citenamefont {Zhou}}]{BauWormaldZhou02}%
  \BibitemOpen
  \bibfield  {author} {\bibinfo {author} {\bibfnamefont {Sheng}\ \bibnamefont
  {Bau}}, \bibinfo {author} {\bibfnamefont {Nicholas~C.}\ \bibnamefont
  {Wormald}}, \ and\ \bibinfo {author} {\bibfnamefont {Sanming}\ \bibnamefont
  {Zhou}},\ }\bibfield  {title} {\enquote {\bibinfo {title} {Decycling numbers
  of random regular graphs},}\ }\href@noop {} {\bibfield  {journal} {\bibinfo
  {journal} {Random Structures and Algorithms}\ }\textbf {\bibinfo {volume}
  {21}},\ \bibinfo {pages} {397--413} (\bibinfo {year} {2002})}\BibitemShut
  {NoStop}%
\bibitem [{\citenamefont {Hoppen}\ and\ \citenamefont
  {Wormald}(2008)}]{HoppenWormald08}%
  \BibitemOpen
  \bibfield  {author} {\bibinfo {author} {\bibfnamefont {Carlos}\ \bibnamefont
  {Hoppen}}\ and\ \bibinfo {author} {\bibfnamefont {Nicholas}\ \bibnamefont
  {Wormald}},\ }\bibfield  {title} {\enquote {\bibinfo {title} {Induced forests
  in regular graphs with large girth},}\ }\href@noop {} {\bibfield  {journal}
  {\bibinfo  {journal} {Comb. Probab. Comput.}\ }\textbf {\bibinfo {volume}
  {17}},\ \bibinfo {pages} {389--410} (\bibinfo {year} {2008})}\BibitemShut
  {NoStop}%
\bibitem [{\citenamefont {Zhou}(2013)}]{zhou2013spin}%
  \BibitemOpen
  \bibfield  {author} {\bibinfo {author} {\bibfnamefont {Hai-Jun}\ \bibnamefont
  {Zhou}},\ }\bibfield  {title} {\enquote {\bibinfo {title} {Spin glass
  approach to the feedback vertex set problem},}\ }\href@noop {} {\bibfield
  {journal} {\bibinfo  {journal} {The European Physical Journal B}\ }\textbf
  {\bibinfo {volume} {86}},\ \bibinfo {pages} {455} (\bibinfo {year}
  {2013})}\BibitemShut {NoStop}%
\bibitem [{\citenamefont {Ding}\ \emph {et~al.}(2015)\citenamefont {Ding},
  \citenamefont {Sly},\ and\ \citenamefont {Sun}}]{ding2015proof}%
  \BibitemOpen
  \bibfield  {author} {\bibinfo {author} {\bibfnamefont {Jian}\ \bibnamefont
  {Ding}}, \bibinfo {author} {\bibfnamefont {Allan}\ \bibnamefont {Sly}}, \
  and\ \bibinfo {author} {\bibfnamefont {Nike}\ \bibnamefont {Sun}},\
  }\bibfield  {title} {\enquote {\bibinfo {title} {Proof of the satisfiability
  conjecture for large k},}\ }in\ \href@noop {} {\emph {\bibinfo {booktitle}
  {Proceedings of the forty-seventh annual ACM symposium on Theory of
  Computing}}}\ (\bibinfo {organization} {ACM},\ \bibinfo {year} {2015})\ pp.\
  \bibinfo {pages} {59--68}\BibitemShut {NoStop}%
\bibitem [{\citenamefont {Wormald}(1995)}]{Wormald1995}%
  \BibitemOpen
  \bibfield  {author} {\bibinfo {author} {\bibfnamefont {Nicholas~C.}\
  \bibnamefont {Wormald}},\ }\bibfield  {title} {\enquote {\bibinfo {title}
  {Differential equations for random processes and random graphs},}\ }\href
  {\doibase 10.1214/aoap/1177004612} {\bibfield  {journal} {\bibinfo  {journal}
  {Ann. Appl. Probab.}\ }\textbf {\bibinfo {volume} {5}},\ \bibinfo {pages}
  {1217--1235} (\bibinfo {year} {1995})}\BibitemShut {NoStop}%
\bibitem [{\citenamefont {Pfister}(2014)}]{Pfister14analysis}%
  \BibitemOpen
  \bibfield  {author} {\bibinfo {author} {\bibfnamefont {Henry~D.}\
  \bibnamefont {Pfister}},\ }\href@noop {} {\enquote {\bibinfo {title} {The
  analysis of graph peeling processes},}\ } (\bibinfo {year} {2014}),\ \bibinfo
  {note} {course notes available online:
  \url{http://pfister.ee.duke.edu/courses/ece590_gmi/peeling.pdf}}\BibitemShut
  {NoStop}%
\bibitem [{\citenamefont {Luby}\ \emph {et~al.}(2001)\citenamefont {Luby},
  \citenamefont {Mitzenmacher}, \citenamefont {Shokrollahi},\ and\
  \citenamefont {Spielman}}]{Luby-it01}%
  \BibitemOpen
  \bibfield  {author} {\bibinfo {author} {\bibfnamefont {M.~G.}\ \bibnamefont
  {Luby}}, \bibinfo {author} {\bibfnamefont {M.}~\bibnamefont {Mitzenmacher}},
  \bibinfo {author} {\bibfnamefont {M.~A.}\ \bibnamefont {Shokrollahi}}, \ and\
  \bibinfo {author} {\bibfnamefont {D.~A.}\ \bibnamefont {Spielman}},\
  }\bibfield  {title} {\enquote {\bibinfo {title} {Efficient erasure correcting
  codes},}\ }\href@noop {} {\bibfield  {journal} {\bibinfo  {journal} {IEEE
  Trans.\ Inform.\ Theory}\ }\textbf {\bibinfo {volume} {47}},\ \bibinfo
  {pages} {569--584} (\bibinfo {year} {2001})}\BibitemShut {NoStop}%
\bibitem [{dem()}]{demo_git_Christian}%
  \BibitemOpen
  \href@noop {} {}\bibinfo {note}
  {\url{https://github.com/hcmidt/corehd}}\BibitemShut {NoStop}%
\bibitem [{\citenamefont {Achlioptas}(2000)}]{Achlioptas-stoc00}%
  \BibitemOpen
  \bibfield  {author} {\bibinfo {author} {\bibfnamefont {Dimitris}\
  \bibnamefont {Achlioptas}},\ }\bibfield  {title} {\enquote {\bibinfo {title}
  {Setting 2 variables at a time yields a new lower bound for random
  3-{SAT}},}\ }in\ \href@noop {} {\emph {\bibinfo {booktitle} {Proceedings of
  the thirty-second annual ACM symposium on Theory of Computing}}}\ (\bibinfo
  {organization} {ACM},\ \bibinfo {year} {2000})\ pp.\ \bibinfo {pages}
  {28--37}\BibitemShut {NoStop}%
\bibitem [{\citenamefont {Achlioptas}(2001)}]{Achlioptas-tcs01}%
  \BibitemOpen
  \bibfield  {author} {\bibinfo {author} {\bibfnamefont {Dimitris}\
  \bibnamefont {Achlioptas}},\ }\bibfield  {title} {\enquote {\bibinfo {title}
  {Lower bounds for random 3-{SAT} via differential equations},}\ }\href@noop
  {} {\bibfield  {journal} {\bibinfo  {journal} {Theoretical Computer Science}\
  }\textbf {\bibinfo {volume} {265}},\ \bibinfo {pages} {159--185} (\bibinfo
  {year} {2001})}\BibitemShut {NoStop}%
\bibitem [{\citenamefont {Cocco}\ and\ \citenamefont
  {Monasson}(2001)}]{MonassonCocco01}%
  \BibitemOpen
  \bibfield  {author} {\bibinfo {author} {\bibfnamefont {S}~\bibnamefont
  {Cocco}}\ and\ \bibinfo {author} {\bibfnamefont {R}~\bibnamefont
  {Monasson}},\ }\bibfield  {title} {\enquote {\bibinfo {title} {Statistical
  physics analysis of the computational complexity of solving random
  satisfiability problems using backtrack algorithms},}\ }\bibfield
  {booktitle} {\emph {\bibinfo {booktitle} {Physics of Condensed Matter}},\
  }\href@noop {} {\ \textbf {\bibinfo {volume} {22}},\ \bibinfo {pages}
  {505--531} (\bibinfo {year} {2001})}\BibitemShut {NoStop}%
\bibitem [{\citenamefont {Pittel}\ \emph {et~al.}(1996)\citenamefont {Pittel},
  \citenamefont {Spencer},\ and\ \citenamefont {Wormald}}]{Pittel-jctb96}%
  \BibitemOpen
  \bibfield  {author} {\bibinfo {author} {\bibfnamefont {B.}~\bibnamefont
  {Pittel}}, \bibinfo {author} {\bibfnamefont {J.}~\bibnamefont {Spencer}}, \
  and\ \bibinfo {author} {\bibfnamefont {N.}~\bibnamefont {Wormald}},\
  }\bibfield  {title} {\enquote {\bibinfo {title} {Sudden emergence of a giant
  k-core in a random graph},}\ }\href@noop {} {\bibfield  {journal} {\bibinfo
  {journal} {Journal of Combinatorial Theory-Series B}\ }\textbf {\bibinfo
  {volume} {67}} (\bibinfo {year} {1996})}\BibitemShut {NoStop}%
\bibitem [{\citenamefont {Wormald}(1999)}]{Wormald-lara99}%
  \BibitemOpen
  \bibfield  {author} {\bibinfo {author} {\bibfnamefont {Nicholas~C}\
  \bibnamefont {Wormald}},\ }\bibfield  {title} {\enquote {\bibinfo {title}
  {The differential equation method for random graph processes and greedy
  algorithms},}\ }\href@noop {} {\bibfield  {journal} {\bibinfo  {journal}
  {Lectures on approximation and randomized algorithms}\ ,\ \bibinfo {pages}
  {73--155}} (\bibinfo {year} {1999})}\BibitemShut {NoStop}%
\bibitem [{\citenamefont {Ackerman}\ \emph {et~al.}(2010)\citenamefont
  {Ackerman}, \citenamefont {Ben-Zwi},\ and\ \citenamefont
  {Wolfovitz}}]{Ackerman2010}%
  \BibitemOpen
  \bibfield  {author} {\bibinfo {author} {\bibfnamefont {Eyal}\ \bibnamefont
  {Ackerman}}, \bibinfo {author} {\bibfnamefont {Oren}\ \bibnamefont
  {Ben-Zwi}}, \ and\ \bibinfo {author} {\bibfnamefont {Guy}\ \bibnamefont
  {Wolfovitz}},\ }\bibfield  {title} {\enquote {\bibinfo {title} {Combinatorial
  model and bounds for target set selection},}\ }\href {\doibase
  http://dx.doi.org/10.1016/j.tcs.2010.08.021} {\bibfield  {journal} {\bibinfo
  {journal} {Theoretical Computer Science}\ }\textbf {\bibinfo {volume}
  {411}},\ \bibinfo {pages} {4017 -- 4022} (\bibinfo {year}
  {2010})}\BibitemShut {NoStop}%
\end{thebibliography}%

\appendix

\section{Proofs}

\subsection{Proof of Lemma~\ref{lem:stoptime}}
\label{appendix_lem_stoptime_proof}

Based on some worst-case assumptions, we can analyze the evolution of the number of degree-$k_<$ edges.
In particular, in the worst case, the initial removal of a degree-$d$ node can generate $d(k-1)$ degree-$k_<$ edges (i.e., all of its edges were attached to degree-$k$ nodes).
During removal, the number of degree-$k$ edges can also increase by at most $dk$.
Though both of these events cannot happen simultaneously, our worst-case analysis includes both of these effects.

During the $m$-th trimming step, a random edge is chosen uniformly from the set of edges adjacent to nodes of degree $k_<$ (i.e., degree less than $k$).
Let the random variable $Z_m \in \{ k_<,k,k+1,\ldots,d\}$ equal the degree of the node adjacent to the other end and the random variable $X_m$ equal the overall change in the number of degree-$k_<$ edges. 

For $m=1$, this edge is distributed according to
\[ \Pr (Z_1 = z) = \begin{cases} \frac{h_z'}{M-1} & \text{if }z \geq k \\ \frac{h_{k_<}' - 1}{M-1} & \text{if }z = k_<, \end{cases} \]
where $h_z'$ denotes the number of edges of degree $z$ before trimming.
If $Z_1=k_<$, then the edge connects two degree-$k_<$ nodes and removal reduces the number of $k_<$ edges by $X_1=-2$.
If $Z_1=k$, then the edge connects a degree-$k$ node with a degree-$k_<$ node and removal reduces the number of degree-$k$ edges by $k$ and increases the number of degree-$k_<$ edges by $X_1 = k-2$.
If $Z_1>k$, then removal decreases the number of degree-$Z_1$ edges by $Z_1$ and increases the number of degree-$(Z_1-1)$ edges by $Z_1-1$.
In this case, the number of degree-$k_<$ edges is changed by $X_1 = -1$.

The crux of the analysis below is to make additional worst-case assumptions.
One can upper bound the probability of picking degree-$k$ edges because, after $m$ steps of trimming, the number of degree-$k$ edges can be at most $h_k + dk + mk$ (i.e., if initial node removal generates $dk$ edges of degree-$k$ and $Z_m=k+1$ during each step of trimming).
Also, one can upper bound the number of degree-$k_<$ edges because at least one is removed during each step.
To upper bound the random variable $T$, we use the following worst-case distribution for $X_m$,
\begin{equation}
\Pr (X_m = x) = \begin{cases} \frac{h_k+dk+mk}{M-2d-2m} & \text{if }x = k-2 \\ \frac{M-2d-h_k-dk-mk}{M-2d-2m} & \text{if }x = -1. \end{cases} \label{eq:worst_case_x}
\end{equation}
In this formula, the $2d$ term represents initial removal of a degree-$d$ node and the $2m$ term represents the edge removal associated with $m$ steps of trimming.
We note that, since edges attach two nodes of different degrees, all edges are counted twice in $\bm{h}$.
Similarly, the $dk$ term represents the worst-case increase in the number of degree-$k$ edges during the initial removal.

Now, we can upper bound the number of degree-$k_<$ edges after $m$ steps of trimming
by the random sum \[S_{m}=d(k-1)+X_{1}+X_{2}+\cdots+X_{m},\] where each $X_i$ is drawn independently according to the worst-case distribution~\eqref{eq:worst_case_x}.
The term $d(k-1)$ represents the worst-case event that the initial node removal generates $d(k-1)$ edges of degree-$k_<$.
By choosing $M_0 (\delta)$ appropriately, our initial assumption about $\bm{h}$ implies that $E[X_{i}]\leq-2\delta$ for all $M>M_0 (\delta)$.
Thus, the random variable $S_m$ will eventually become zero with probability 1.
Since $S_m$ upper bounds the number of degree-$k_<$ edges after $m$ steps of trimming, when it becomes zero (say at time $m'$), it follows that the stopping time satisfies $T\leq m'$.

To complete the proof, we focus on the case of $m' = (k-1)^2 \delta^{-2} \ln M$.
Since the distribution of $X_m$ gradually places more weight on the event $X_m=k-2$ as $m$ increases, we can also upper bound $S_{m'}$ by replacing $X_1,\ldots,X_{m'-1}$ by i.i.d.\ copies of $X_{m'}$.
From now on, let $Y$ denote a random variable with the same distribution as $X_{m'}$ and define $S' =  Y_1+Y_2+\cdots+Y_{m'}$ to be a sum of $m'$ i.i.d.\ copies of $Y$.
Then, we can write
\[ \Pr (T \leq m') \geq \Pr (S' \leq -d(k-1)) = 1 - \Pr (S' > -d(k-1)). \]
Next, we observe that~\eqref{eq:worst_case_x} converges in distribution (as $M\to \infty$) to a two-valued random variable $Y^*$ that places probability at most $\frac{1-3\delta}{k-1}$ on the point $k-2$ and the remaining probability on $-1$.
Since $E[Y^*]=-3\delta$, there is an $M_0 (\delta)$ such that $E[Y]=-2\delta$ for all $M>M(\delta)$
Finally, Hoeffding's inequality implies that
\[ \Pr (S' > -\delta m') \leq e^{-2\delta^2 m' / (k-1)^2} = M^{-2}. \]
Putting these together, we see that
\[ \Pr (T > m') \leq \Pr (S' > -d(k-1)) \leq M^{-2} \]
for $m' > d(k-1)/\delta$.
As $m'=(k-1)^2 \delta^{-2} \ln M$, the last condition can be satisfied by choosing $M(\delta) > e^{\delta^2 d/(k-1)}$.
Since this condition is trivial as $\delta \to 0$, the previous convergence condition will determine the minimum value for $M_0 (\delta)$ when $\delta$ is sufficiently small.

\subsection{Proof Sketch for Lemma~\ref{lem:diffeq}}
\label{appendix_lem_diffeq_proof}

We note that graph processes like \coreHD\  often satisfy the conditions necessary for fluid limits of this type.
The only challenge for the \coreHD\  process is that the $\bm{h}(n)$ process may have unbounded jumps due to the trimming operation.
To handle this, we use Lemma~\ref{lem:stoptime} to show that, with high probability, the trimming process terminates quickly when $\bm{h}(n)$ satisfies the stated conditions.
Then, the fluid limit of the \coreHD\  process follows from~\cite[Theorem~5.1]{Wormald-lara99}.

To apply~\cite[Theorem~5.1]{Wormald-lara99}, we first fix some $\delta>0$ and choose the valid set $D$ to contain normalized degree distributions satisfying $\eta_k (t) \leq \frac{1-4\delta}{k-1} \left| \boldsymbol\eta (t) \right|$ and $\left|\boldsymbol{\eta}(t)\right|\geq \delta$.
This choice ensures that Lemma~\ref{lem:stoptime} can be applied uniformly for all $\boldsymbol{\eta}(t) \in D$.
Now, we describe how the constants are chosen to satisfy the necessary conditions of the theorem and achieve the stated result.
For condition $(i)$ of~\cite[Theorem~5.1]{Wormald-lara99}, we will apply Lemma~\ref{lem:stoptime} but first recall that the initial number of nodes in the graph is denoted by $N$ and hence $\left|\boldsymbol{\eta}(t)\right|\geq \delta$ implies $M=|\bm{h}(n)| \geq B N$ with high probability for some $B>0$.
Thus,  condition $(i)$ can be satisfied by choosing\footnote{The additional factor of $d$ is required because each trimming step can change the number of degree-$j$ edges by at most $d$.} $\beta(N) = (k-1)^2 \delta^{-2} d \ln N$ applying Lemma~\ref{lem:stoptime} with $M=BN$ to see that $\gamma(N) = O(N^{-2})$.
To verify condition $(ii)$ of the theorem, we note that~\eqref{eq:non_linear_dgl} is derived from the large-system limit of the drift and the error $\lambda(n)$ can be shown to be $O(N^{-1})$. 
To verify condition $(iii)$ of the theorem, we note that that~\eqref{eq:non_linear_dgl} is Lipschitz on $D$.
Finally, we choose $\lambda(N) = N^{-1/4} \geq \lambda_1 (N) + C_0 N \gamma(N)$ for large enough $N$.
Since
\[ \beta(N)/\lambda(N) e^{-N\lambda(N)^3 / \beta(N)^3} = O\left( \frac{\ln N}{N^{-1/4}} e^{-A N^{1/4} / (\ln N)^3}\right) = O \left( N^{-1} \right), \]
these choices imply that the \coreHD\  process concentrates around $\boldsymbol{\eta}(t)$ as stated.
We only sketch this proof because very similar arguments have used previously for other graph processes~\cite{Achlioptas-stoc00,Achlioptas-tcs01,MonassonCocco01}.

\end{document}